	\theoremstyle{plain}
	\newtheorem{theorem}{Theorem}
	\newtheorem{proposition}{Proposition}
	\theoremstyle{definition}
	\newtheorem{definition}{Definition}
	\newtheorem{remark}{Remark}
	\newtheorem{assumption}{Assumption}
\newcommand\Db{\mathds{D}}
\newcommand\Eb{\mathds{E}}
\newcommand\Fb{\mathds{F}}
\newcommand\Gb{\mathds{G}}
\newcommand\Pb{\mathds{P}}
\newcommand\Rb{\mathds{R}}
\newcommand\Nb{\mathds{N}}
\newcommand\Ac{\mathcal{A}}
\newcommand\Dc{\mathcal{D}}
\newcommand\Fc{\mathcal{F}}
\newcommand\Gc{\mathcal{G}}
\newcommand\Lc{\mathcal{L}}
\newcommand\Mc{\mathcal{M}}
\newcommand\xt{\widetilde{x}}
\newcommand\yt{\widetilde{y}}
\renewcommand{\d}{\partial}
\newcommand{\ee}{\mathrm{e}}
\newcommand{\eqlnostar}[2]{\begin{align}\label{#1}#2\end{align}}
\newcommand{\eqstar}[1]{\begin{align*}#1\end{align*}}
\newcommand{\eq}[1]{\ifthenelse{\equal{#1}{*}}
  {\eqstar}
  {\eqlnostar{#1}}
 }
\begin{document}

\title{Efficient calibration of the shifted square-root diffusion model to credit default swap spreads using asymptotic approximations}

\author{Ankush Agarwal \footnote{DSAS, University of Western Ontario, London, Canada. email: \texttt{aagarw93@uwo.ca}}\and Ying Liao \footnote{ASBS, University of Glasgow, Glasgow, United Kingdom. email: \texttt{ying.liao@glasgow.ac.uk}}}
\date{\today}

\maketitle 

\begin{abstract}
\noindent We derive a closed-form approximation for the credit default swap (CDS) spread in the two-dimensional shifted square-root diffusion (SSRD) model using asymptotic coefficient expansion technique to approximate solutions of nonlinear partial differential equations. Specifically, we identify the Cauchy problems associated with two terms in the CDS spread formula that lack analytical solutions and derive asymptotic approximations for these terms. Our approximation does not require the assumption of uncorrelated interest rate and default intensity processes as typically required for calibration in the SSRD model. Through several calibration studies using market data on CDS spread, we demonstrate the accuracy and efficiency of our proposed formula. 
\smallskip

\noindent \textbf{Keywords}: Credit default swap; Shifted square-root diffusion model; Asymptotic coefficient expansion
\\ \noindent \textbf{JEL Classification}: C600, G130
% G130: Contingent Pricing; Futures Pricing; option pricing
% C600: Mathematical Methods; Programming Models; Mathematical and Simulation Modeling: General
\\ \noindent \textbf{2010 MSC}: 65C30, 	91G40
% 2010 Mathematics Subject Classification
% 65C30: Numerical solutions to stochastic differential and integral equations
% 91G40: Credit risk (Game theory, economics, finance, and other social and behavioral sciences)
\end{abstract}

\section{Introduction}
In this work, we apply the asymptotic coefficient expansion method of \citet{lorig_analytical_2015} for nonlinear partial differential equations (PDEs) to derive an approximate formula for credit default swap (CDS) spread under the two-dimensional shifted square-root diffusion (SSRD) model of \citet{brigo_credit_2005}, within a stochastic intensity framework. Unlike the original approach, which assumed uncorrelated interest rate and default intensity factors for calibration, we derive our approximation formula and subsequently propose a calibration strategy, without assuming such a condition. Through extensive numerical studies, we show that our approximation formula allows fast calibration to market data and provides accurate estimates of CDS spreads and related survival probabilities.  

The two-dimensional SSRD model is based on the generalised square-root diffusion model introduced by \citet{duffie_modeling_1999}, where both interest rate and default intensity are modelled as Cox-Ingersoll-Ross (CIR) processes. A key feature of the SSRD model is its ability to separate the calibration of interest rate and default intensity processes using bond price and credit spread data. The model has gained significant attention and has been extended in later research. For example, \citet{brigo_comparison_2006} demonstrated its effectiveness in capturing implied volatility patterns in CDS options and compared its performance to a term structure market model. \citet{brigo_credit_2006} further enhanced the model by adding jumps to the stochastic intensity process, enabling better calibration to default swaptions across various strikes and maturities. Later, \citet{brigo_exact_2010} advanced analytical tractability of the extended SSRD model and developed an exact formula for pricing default swaptions.

In contrast to the SSRD model, recent developments in credit derivatives modelling have focused on integrating market dynamics -  such as stock prices, volatilities, and default intensities - while making more restrictive assumptions about interest rate. \citet{carr_jump_2006} introduced the jump-to-default constant elasticity of variance (JDCEV) model, a hybrid credit-equity framework where stock prices follow a diffusion process with the potential to drop to zero at default. In this model, interest rate is assumed to be positive and deterministic, and the default intensity is linked to the instantaneous variance of stock price. \citet{di_francesco_cds_2019} extended this model by introducing a stochastic interest rate, which can be negative, using the Vasicek process, relaxing the assumption of positive and deterministic rate. However, these assumptions, particularly around interest rate, may not fully reflect recent market conditions, such as the Bank of Japan's decision to raise rates in March 2024, signaling an end to the era of negative interest rate.

While the SSRD model provides key advantages, it does not offer a closed-form solution for CDS spreads, requiring approximation methods. \citet{brigo_credit_2005} proposed an analytical approximation using the \textit{Vasicek-mapping} technique to price CDS within the SSRD framework. This method involves mapping the two-dimensional correlated CIR dynamics to a correlated Vasicek model, then calculating the CDS spread using the Vasicek dynamics. In this work, we avoid the mapping approach and derive an asymptotic approximation for the CDS spread directly within the SSRD model. Recently, \citet{lorig_explicit_2017} introduced a unified method for pricing European-style options across various volatility models, building on earlier work of \citet{pagliarani_analytical_2012} and \citet{lorig_analytical_2015}. Their approach uses Taylor's expansions to solve nonlinear PDEs with state-dependent coefficients. We apply their method to derive a second-order asymptotic approximation for CDS spreads, using the CDS pricing framework of \citet{brigo_credit_2005}. Based on this approximation, we propose a new calibration technique that does not assume uncorrelated interest rate and default intensity factors. Through extensive numerical tests with real-world data, we show that our second-order approximation accurately estimates CDS spreads and outperforms the mapping approximation in the SSRD model.

The rest of this paper is organised as follows: Section \ref{sec: model setting} introduces the operational mechanism of CDS and reviews the SSRD model. Section \ref{sec: cds spread} establishes the notations, and Section \ref{sec: asymptotic approximation technique} outlines the asymptotic approximation technique. Section \ref{sec: approximation formula} presents the explicit approximation formulas for the CDS spread and risk-neutral survival probability. Section \ref{sec: calibration approach} details the model calibration process, and Section \ref{sec: calibration results} demonstrates the numerical performance of the proposed method in the calibrated model. The appendices \ref{sec: 2nd expression} to \ref{sec: further calibration - JDCEV} contain the explicit expression for the second-order approximation formula of CDS spread and additional calibration results.

\section{Model and assumptions}
\label{sec: model setting}
A credit default swap (CDS) is a financial derivative that insures against the default risk of a specific entity, called the reference entity. The protection buyer has the right to sell the reference entity's bonds at face value if a default occurs, with the notional principal defining this face value. To maintain this right, the buyer pays premiums to the seller, based on a percentage of the notional principal determined by the CDS spread. The protection seller compensates the buyer upon default, typically paying a percentage of the notional principal known as the loss-given-default, which equals one minus the recovery rate.

For analytical simplicity, we assume frictionless financial markets with no arbitrage opportunities or dividends. Following  \citet{duffie_modeling_1999}, we model an entity's default as an exogenous process that is independent of default-free markets, with no economic or financial indicators predicting its occurrence. This means the default dynamic is entirely governed by the default intensity process. The default intensity at time $t$, symbolised by $\lambda_t$, is defined as the instantaneous probability of the default event. We assume that the default intensity is time-dependent, stochastic, and strictly positive at all times, i.e., $\lambda_t >0$ for $t \ge 0$. 

\begin{assumption}
\label{assump: default}
The default is hypothesised as the first jump in a Cox process (doubly stochastic Poisson process), and the time $\tau$ at which the default occurs, is characterised as the initial jump time within this framework. 
\end{assumption}

Let us introduce a complete filtered probability space $(\Omega, \Gc, \Pb)$, where $\Pb$ denotes the risk-neutral pricing measure based on the short-term interest rate process $r_t$. We define the filtration $\Db := \{\Dc_t,\ t \ge 0 \}$, where $\Dc_t := \sigma (D_u \mid u\le t)$, with a right-continuous process $D_t := \mathds{1}_{\{\tau \le t\}}$, capturing the occurrence and precise timing of a default event. Moreover, we define the filtration $\Fb := \{\Fc_t, t \ge 0 \}$, representing the information of the default-free market up to time $t$. Let $\Gb := \{\Gc_t, t \ge 0 \},$ be an arbitrary filtration on $(\Omega, \Gc, \Pb)$, which is assumed to satisfy $\Gb = \Fb \lor \Db$, i.e., $\Gc_t = \Fc_t \lor \Dc_t$. Therefore, $\Gb$ is the enlarged filtration. This probability space forms the foundation for all subsequent stochastic processes, with expectations calculated with respect to $\Pb$.

\begin{definition}
\label{def: default time}
(\citet{brigo_interest-rate_2001}, Section 22.2.3) Let $\Lambda_t$ denote the cumulative intensity up to time $t$, which is defined through the formula $\Lambda_t := \int_0^t \lambda_s \dd s$. By the property of the Cox process, the default time is permitted to be denoted as $\tau := \Lambda^{-1} (\xi)$, where $\xi$ is a standard exponential random variable independent of $\Fc_t$.
\end{definition}

\begin{remark}
Assuming that the default intensity $\lambda_t$ indicates the instantaneous probability of default in the infinitesimal interval $[t, t + \dd t]$, given no prior default, it can be inferred that
\eqstar{
\lambda_t \dd t = \Pb \bigl(t \le \tau < t + \dd t\ | \tau \ge t \bigr). 
}
Thus, the survival probability at time $t$, under the risk-neutral measure, is given as
\eqlnostar{eq: survival probability}{
\Pb \bigl( \tau \ge t \bigr) = \Eb \Big[\Pb\bigl(\xi \ge \int_0^t \lambda_s \dd s \big| \Fc_t \bigr) \Big] = \Eb \big[\ee^{-\int_0^t \lambda_s \dd s} \big].
}
\end{remark}

Within the two-dimensional shifted square-root diffusion (SSRD) model proposed by \citet{brigo_credit_2005}, the short-term interest rate $r_t$ and the default intensity $\lambda_t$ are modelled as correlated Cox-Ingersoll-Ross (CIR) processes. The model dynamics are given as
\eqlnostar{eq: ssrd}{
\dd r_t &= \alpha_1(\beta_1 - r_t) \dd t + \sigma_1 \sqrt{r_t} \dd W_t^{(1)}, \nonumber \\
\dd \lambda_t &= \alpha_2(\beta_2 - \lambda_t) \dd t + \sigma_2 \sqrt{\lambda_t} \dd W_t^{(2)},
}
where $W_t^{(1)}$ and $W_t^{(1)}$ are standard Brownian motions under $\Pb,$ with instantaneous correlation $\rho \in [-1,1]$. That is, $\dd W_t^{(1)}\dd W_t^{(2)} = \rho \dd t$. To ensure that the interest rate and the default intensity always remain positive, the parameters must satisfy the following Feller conditions
\eqlnostar{eq: term feller}{
2\alpha_1\beta_1 > \sigma_1^2, && 2\alpha_2\beta_2 > \sigma_2^2.
}
where $\alpha_1, \alpha_2, \beta_1, \beta_2, \sigma_1$, $\sigma_2$ are constants. 

\begin{remark}
\label{re: survivial probability - CIR}
In the SSRD model \eqref{eq: ssrd}, the formula for the risk-neutral survival probability is analogous to that for the zero-coupon bond (ZCB) price. Hence, we have
\eqlnostar{eq: survival probability - CIR}{
Q(t, T):= \Eb \big[\ee^{-\int_t^T \lambda_s \dd s} \big] = A(t,T)\ee^{-B(t,T) \lambda_t}, && t \in [0,T],
}
where
\eqstar{
A(t,T) &:= \left( \frac{2h \ee^{\frac{(\alpha_2 + h)(T-t)}{2}}}{2h + (\alpha_2 + h)(\ee^{h(T-t)} - 1)} \right)^{2\alpha_2\beta_2 \over \sigma_2^2},\\
B(t,T) &:= \frac{2 (\ee^{h(T-t)} - 1)}{2h + (\alpha_2 + h)(\ee^{h(T-t)} - 1)},\\
h &:= \sqrt{\alpha_2^2 + 2\sigma_2^2}.
}
\end{remark}

\section{CDS spread}
\label{sec: cds spread}
The CDS spread is determined by equating the present value of all expected future premium payments to that of the expected protection payment involved in the CDS.

\begin{proposition}
\label{pro: cds spread}
Let $t_i$ be the i-th premium payment date, and $M$ be the total number of premium payments such that $0 \le t_0 < t_1 < \cdots < t_M$. Moreover, we denote the maturity as $T$, typically $t_M = T$. Thus, the CDS spread at time $t$ is derived as
\eqlnostar{eq: cds spread}{
R_t = \frac{\mathds{1}_{\{\tau > t \}} (1-\zeta) \int_t^T \Eb \big[\ee^{-\int_t^s (r_u + \lambda_u) \dd u} \lambda_s \big| \Fc_t \big] \dd s}{\mathds{1}_{\{\tau > t \}} \int_t^T \Eb \big[\ee^{-\int_t^s (r_u + \lambda_u) \dd u} \lambda_s \big| \Fc_t \big] (s - t_{N(s)-1}) \dd s 
+ \mathds{1}_{\{\tau > t \}} \sum_{i=N(t)}^M (t_i - t_{i-1}) \Eb \big[ \ee^{-\int_t^{t_i} (r_s + \lambda_s) \dd s} \big| \Fc_t \big]},
}
where $t_{N(t)}$ is the first date among the $t_i$ that follows $t$, and $\zeta$ is the recovery rate.
\end{proposition}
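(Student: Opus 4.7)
The plan is to derive \eqref{eq: cds spread} by imposing the standard no-arbitrage fair-spread condition, namely that the risk-neutral values of the premium leg and the protection leg coincide at time $t$. The whole argument rests on Assumption~\ref{assump: default} and Definition~\ref{def: default time}, which allow one to translate $\Gb$-conditional expectations involving the default time into purely $\Fb$-conditional expectations weighted by $\lambda_t$.

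First I would write down the two legs explicitly. The protection leg pays $(1-\zeta)$ at time $\tau$ if $t < \tau \le T$, so its time-$t$ value is
\eqstar{
\mathrm{PL}_t = (1-\zeta)\,\Eb\big[\mathds{1}_{\{t < \tau \le T\}}\,\ee^{-\int_t^\tau r_u \dd u}\,\big|\,\Gc_t\big].
}
The premium leg consists of two components: the scheduled coupons $R_t(t_i - t_{i-1})$ paid at each $t_i\ge t$ provided $\tau>t_i$, and the accrual $R_t(\tau - t_{N(\tau)-1})$ paid at default if $t<\tau\le T$. Thus
\eqstar{
\mathrm{PrL}_t = R_t\,\Eb\Big[\sum_{i=N(t)}^M (t_i - t_{i-1}) \mathds{1}_{\{\tau > t_i\}} \ee^{-\int_t^{t_i} r_u \dd u} + (\tau - t_{N(\tau)-1}) \mathds{1}_{\{t < \tau \le T\}} \ee^{-\int_t^\tau r_u \dd u}\,\Big|\,\Gc_t\Big].
}

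Next I would invoke the standard Cox-process filtration-switching identities. For any $\Fc_s$-measurable bounded $X_s$ and $s\ge t$,
\eqstar{
\Eb\big[\mathds{1}_{\{\tau > s\}} X_s \,\big|\,\Gc_t\big] = \mathds{1}_{\{\tau > t\}}\,\Eb\big[\ee^{-\int_t^s \lambda_u \dd u} X_s \,\big|\,\Fc_t\big],
}
and for a payment $Y_\tau$ made at default,
\eqstar{
\Eb\big[\mathds{1}_{\{t < \tau \le T\}} \ee^{-\int_t^\tau r_u \dd u} Y_\tau \,\big|\,\Gc_t\big] = \mathds{1}_{\{\tau > t\}} \int_t^T \Eb\big[\ee^{-\int_t^s (r_u + \lambda_u) \dd u} \lambda_s Y_s \,\big|\,\Fc_t\big] \dd s.
}
These are the classical representations used throughout the reduced-form credit literature, and they follow from conditioning on $\Fc_\infty$ and using that $\tau = \Lambda^{-1}(\xi)$ with $\xi$ standard exponential and independent of $\Fb$. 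Applying the first identity to the scheduled coupons (with $X_s = \ee^{-\int_t^{t_i} r_u\dd u}$ at $s = t_i$), and the second identity twice (once with $Y_\tau = 1-\zeta$ for the protection leg and once with $Y_\tau = \tau - t_{N(\tau)-1}$ for the accrual), reduces both legs to the $\Fb$-conditional expressions appearing in \eqref{eq: cds spread}.

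The main obstacle, and the only step beyond a direct substitution, is the accrual term. The payoff $\tau - t_{N(\tau)-1}$ is a deterministic function of $\tau$, so the second identity delivers $\int_t^T \Eb[\ee^{-\int_t^s(r_u+\lambda_u)\dd u}\lambda_s (s - t_{N(s)-1})\,|\,\Fc_t]\dd s$ once one notes that $N(\tau)=N(s)$ whenever $\tau=s$. Finally, I would equate $\mathrm{PL}_t = \mathrm{PrL}_t$ and solve for $R_t$; the common factor $\mathds{1}_{\{\tau>t\}}$ survives in both numerator and denominator (it is kept explicitly in \eqref{eq: cds spread}), and the resulting quotient is exactly the claimed formula. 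No further integrability or regularity issue arises beyond what the CIR Feller conditions \eqref{eq: term feller} already guarantee.
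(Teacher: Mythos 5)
Your proposal is correct and follows essentially the same route as the paper: decompose the contract into premium and protection legs conditional on $\Gc_t$, convert the $\Gc_t$-expectations to intensity-weighted $\Fc_t$-expectations via the Cox-process conditioning identities (which the paper simply cites as Corollaries 5.1.1 and 5.1.3 of Bielecki and Rutkowski rather than stating them, as you do), and equate the two legs to solve for $R_t$. The only cosmetic difference is that you spell out the filtration-switching formulas and their justification via the independence of $\xi$ from $\Fc_\infty$, which the paper leaves to the reference.
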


\begin{proof}
On the probability space $(\Omega, \Gc, \Pb)$, assuming a unit notional principal, the discounted value at time $t$ of all expected premium payments based on rate $R_t$ is given as
\eqstar{
\text{Pre} (t, T) 
:= \mathds{1}_{\{\tau > t\}} \Eb \left[\ee^{-\int_t^{\tau} r_s \dd s} (\tau - t_{N(\tau)-1}) R_t \mathds{1}_{\{\tau < T\}} + \sum_{i=N(t)}^M (t_i - t_{i-1}) \ee^{-\int_t^{t_i} r_s \dd s} R_t \mathds{1}_{\{\tau \ge t_i\}} \mid \Gc_t \right],
}
and that of the expected protection payment is given by
\eqstar{
\text{Pro} (t, T)  := \mathds{1}_{\{\tau > t\}} \Eb \left[ (1-\zeta) \ee^{-\int_t^{\tau} r_s \dd s} \mathds{1}_{\{\tau < T\}} \mid \Gc_t \right].
}
By Corollary 5.1.1 and Corollary 5.1.3 in \citet{bielecki_credit_2004}, the above two formulas are transformed into the following form
\eqstar{
\mathrm{Pre} (t, T) 
&= \mathds{1}_{\{\tau > t \}} \left( \Eb \big[\ee^{-\int_t^{\tau} r_s \dd s} (\tau - t_{N(\tau)-1}) R_t \mathds{1}_{\{\tau < T \}} \big| \Gc_t \big] 
+ \Eb \big[ \sum_{i=N(t)}^M (t_i - t_{i-1}) R_t \ee^{-\int_t^{t_i} r_s \dd s} \mathds{1}_{\{\tau > t_i \}} \big| \Gc_t \big] \right)\\
&= \mathds{1}_{\{\tau > t \}} R_t \left (\int_t^T \Eb \big[\ee^{-\int_t^s (r_u + \lambda_u) \dd u} \lambda_s \big| \Fc_t \big] (s - t_{N(s)-1}) \dd s
+ \sum_{i=N(t)}^M (t_i - t_{i-1}) \Eb \big[ \ee^{-\int_t^{t_i} (r_s + \lambda_s) \dd s} \big| \Fc_t \big] \right),\\
\mathrm{Pro} (t, T) 
&= \mathds{1}_{\{\tau > t \}} \Eb \big[ \ee^{-\int_t^{\tau} r_s \dd s} (1-\zeta) \mathds{1}_{\{\tau < T \}} \big| \Gc_t \big]\\
&= \mathds{1}_{\{\tau > t \}}  (1-\zeta) \int_t^T \Eb \big[\ee^{-\int_t^s (r_u + \lambda_u) \dd u} \lambda_s \big| \Fc_t \big] \dd s.
}
CDS spread is the rate $R_t$ which makes the  value of the premium and protection legs in CDS equal to each other. Thus, from the derivations above, we get the formula for CDS spread in \eqref{eq: cds spread}.
\end{proof}

\begin{remark}
Market CDS spread at any time can be obtained by setting $t=0$ in \eqref{eq: cds spread} and adjusting the remaining time to maturities accordingly. Thus, going forward we use the following spread formula
\eqlnostar{eq: cds spread0}{
R\equiv R_0 = \frac{(1-\zeta) \int_0^T \Eb \big[\ee^{-\int_0^s (r_u + \lambda_u) \dd u} \lambda_s \big] \dd s}{ \int_0^T \Eb \big[\ee^{-\int_0^s (r_u + \lambda_u) \dd u} \lambda_s \big] (s - t_{N(s)-1}) \dd s 
+ \sum_{i=1}^M (t_i - t_{i-1}) \Eb \big[ \ee^{-\int_0^{t_i} (r_s + \lambda_s) \dd s} \big]}.
}
\end{remark}

Although the SSRD model provides an effective approach for modelling CDS prices, the risk-neutral expectation formula \eqref{eq: cds spread} lacks a closed-form solution. Thus, we derive the approximation formulas for it using the approximation method and procedure outlined in \citet{pascucci_pde_2011}, \citet{lorig_analytical_2015} and \citet{lorig_explicit_2017}. 

\section{Asymptotic approximation technique}
\label{sec: asymptotic approximation technique}
First, we outline the procedure for approximating the solution to a Cauchy problem, which serves as the foundation for deriving the approximation formula for the CDS spread in \eqref{eq: cds spread}. Consider a Cauchy problem in the following form
\eqlnostar{eq: cauchy}{
(\partial_t + \Ac (t)) u(t,z) &= 0, && t \in [0, T),\ z \in \Rb^d, \\
u(T,z) &= \varphi(z), && z \in \Rb^d.
}
where $\Ac(t)$ is the $d$-dimensional second-order differential operator
\eqlnostar{eq: term A}{
\Ac (t) := \sum_{|\alpha| \le 2} a_{\alpha} (t, z) D_z^\alpha.
}
In the above formula, $\alpha$ is written using the standard multi-index notation and is given as follows
\eqstar{
\alpha := (\alpha_1, \dots, \alpha_d)\in \mathbb{N}_0^d, && |\alpha| := \sum_{i=1}^d \alpha_i.}
Then, the differential operator $D_z^\alpha := \partial_{z_1}^{\alpha_1} \cdots \partial_{z_d}^{\alpha_d}.$ 
We assume that the coefficients $a_\alpha$ are bounded and have globally Lipschitz continuous derivatives up to order $N\in \Nb_0$, with the derivatives having a bounded norm. Moreover, we denote by $(a_{\alpha, n} (t, z))_{0\le n\le N}$ an $N$-th order polynomial expansion for any $t\in [0,T]$, where $a_{\alpha, n} (t, z)$ are polynomials with state-independent $a_{\alpha, 0}$ , i.e, $a_{\alpha, 0} (t, \cdot) = a_{\alpha, 0} (t)$. For any fixed $\bar{z}: \Rb^+ \to \Rb^d$, we define $a_{\alpha, n}$ as the $n$-th order term of the Taylor expansion of $a_\alpha$ in the spatial variables around $\bar{z}(\cdot)$. That is, we set
\eqlnostar{eq: a expansion}{
a_{\alpha, n} (\cdot, z) := \sum_{|\beta|=n}\frac{D_z^\beta a_{\alpha} (\cdot, \bar{z}(\cdot))}{\beta !} (z-\bar{z}(\cdot))^{\beta}, && n\le N, && |\alpha|\le 2,
}
with $\beta ! := \beta_1 ! \cdots \beta_d !$ and $z^\beta := z_1^{\beta_1} \cdots z_d^{\beta_d}$. The above expansion for the coefficients $a_\alpha$ allows the expansion point $\bar{z}$ to evolve in time, which enhances the approximation accuracy for option prices, as noted in \citet{lorig_explicit_2017}.
Let us assume that $\Ac(t)$ can be rewritten as $\Ac(t) = \sum_{n=0}^\infty \Ac_n (t)$, where $\Ac_n (t)$ is given by
\eqlnostar{eq: A expansion}{
\Ac_n (t) := \sum_{|\alpha| \le 2} a_{\alpha, n} (t, z) D_z^\alpha,
}
In light of this expansion, we express $u$ as an infinite sum
\eqlnostar{eq: un}{
u(t, z) = \sum_{n=0}^\infty u_n(t, z).
}
Inserting the above formulas - expanded forms of $\Ac$ and $u$ into \eqref{eq: cauchy} - we find that $(u_n(t, z))_{n\ge 0}$ satisfy the following sequence of Cauchy problems
\begin{align}
(\partial_t + \Ac_0(t)) u_0(t, z) &= 0, && u_0(T, z) = \varphi(z), && z\in \Rb^d, \label{eq: cauchy expansion - 0}\\
(\partial_t + \Ac_0(t)) u_n(t, z) &= -\sum_{h=1}^\infty \Ac_k (t) u_{n-h}(t, z), && u_n(T, z) = 0, && z\in \Rb^d.\label{eq: cauchy expansion - n}
\end{align}
Since functions $a_{\alpha, 0} (t, \cdot)$ depend only on $t$, the operator $\Ac_0(t)$ is elliptic with time-dependent coefficients. It is useful to express $\Ac_0(t)$ in the following form
\eqlnostar{eq: A0}{
\Ac_0 (t) = {1\over 2} \sum_{i, j =1}^d C_{i, j}(t) \partial_{z_i z_j} + \sum_{i=1}^d m_i(t) \partial_{z_i} + \gamma(t),
}
where $C(t) = \bigl(C_{i, j}\bigr)_{1\leq i, j \leq d}(t),$ is a positive definite $d\times d$ matrix, $m(t)=\bigl(m_i\bigr)_{1\leq i\leq d}(t),$ is a $d$-dimensional vector, and $\gamma(t)$ is a scalar function. By \textit{Duhamel's principle}, we can derive $u_0(t, z)$, the solution to \eqref{eq: cauchy expansion - 0}, as follows
\eqlnostar{eq: term u0}{
u_0(t, z) &= \ee ^{\int_t^T \gamma (s) \dd s} \int_{\mathbb R^d} \dd \eta \Gamma_0(t, z, T,\eta) \varphi(\eta),
}
where $\Gamma_0$ is identified as a $d$-dimensional Gaussian density function
\eqlnostar{eq: term Gamma}{
\Gamma_0(t, z, T, \eta) &:= \frac{1}{\sqrt{ (2\pi)^d |\mathbf{C}(t,T)|}} \exp\left(-{1\over 2} \big(\eta-z-\mathbf{m}(t, T)\big)^{\mathrm{T}} \mathbf{C}(t,T)^{-1} \big(\eta-z-\mathbf{m}(t,T)\big)\right),
}
with covariance matrix $\mathbf{C}(t,T)$ and mean vector $z+\mathbf{m}(t,T)$ defined as 
\eqlnostar{eq: term cm}{
\mathbf{C}(t,T) := \int^T_t C(s) \mathrm{d}s, && \mathbf{m}(t,T) := \int^T_t m(s) \mathrm{d}s.
}

\begin{theorem}
\label{thm: un}
For any $n\ge 1$, the function $u_n(t, z)$ satisfying \eqref{eq: cauchy expansion - n}, is given explicitly by
\eqlnostar{eq: term un}{
u_n (t, z) = \Lc_n (t, T) u_0(t, z), && t \in [0, T),\ z \in \Rb^d,
}
where $\Lc_n(t, T)$ is defined as
\eqlnostar{eq: term Ln}{
\Lc_n(t, T) := \sum_{h=1}^n \int_t^T \dd s_1 \int_{s_1}^T \dd s_2 \cdots \int_{s_{h-1}}^T \dd s_h \sum_{i\in I_{n, h}} \Gc_{i_1}(t, s_1) \Gc_{i_2}(t, s_2) \cdots \Gc_{i_h} (t, s_h).
}
In the above, 
\eqstar{
I_{n, h} &:= \{i= (i_1, i_2, \dots, i_h) \in \Nb^h | i_1 + i_2 +\cdots + i_h = n \}, \quad 1 \le h \le n,\\
\Gc_i (t, s) &:=  \sum_{|\alpha|\le 2} a_{\alpha, i} (s, \Mc(t, s)) D_z^\alpha,
}
with $a_{\alpha, n}$ as specified in \eqref{eq: a expansion} and
\eqstar{
\Mc(t,s) := z + \mathbf{m} (t, s) +\mathbf{C} (t, s) \nabla_z.
}
\end{theorem}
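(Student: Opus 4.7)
The plan is to prove the claim by induction on $n$, combining Duhamel's principle for \eqref{eq: cauchy expansion - n} with two commutation identities between the fundamental solution operator
\eqstar{
P(t,s) f(z) := \ee^{\int_t^s \gamma(r) \dd r} \int_{\Rb^d} \Gamma_0(t,z,s,z') f(z') \dd z'
}
of $\partial_t + \Ac_0$ and the polynomial-coefficient operators $\Ac_m$ and $\Gc_m$. The additivities $\mathbf{m}(t,s)+\mathbf{m}(s,s')=\mathbf{m}(t,s')$ and $\mathbf{C}(t,s)+\mathbf{C}(s,s')=\mathbf{C}(t,s')$ from \eqref{eq: term cm}, together with the convolution formula for Gaussians, yield the semigroup property $P(t,s)P(s,s')=P(t,s')$ and in particular $P(t,s)u_0(s,\cdot)=u_0(t,\cdot)$. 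Duhamel's principle applied to \eqref{eq: cauchy expansion - n} then gives
\eqstar{
u_n(t,z) = \sum_{h=1}^{n} \int_t^T \big[P(t,s)\Ac_h(s) u_{n-h}(s,\cdot)\big](z)\, \dd s.
}

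The central ingredient is the Gaussian transfer identity $(z'_i - z_i - \mathbf{m}_i(t,s))\Gamma_0 = \sum_j \mathbf{C}_{ij}(t,s)\partial_{z_j}\Gamma_0$, which lets one replace multiplication by $z'_i$ inside $\int \Gamma_0\,\cdot\,\dd z'$ by the operator $\Mc_i(t,s)$ acting outside. Combined with $\partial_{z'_i}\Gamma_0 = -\partial_{z_i}\Gamma_0$ (so that $\int \Gamma_0 D_{z'}^\alpha f = D_z^\alpha \int \Gamma_0 f$ after integration by parts) and the symmetry of $\mathbf{C}(t,s)$ (which makes the $\Mc_i(t,s)$ commute pairwise, so that monomials $\Mc(t,s)^\beta$ are unambiguous), one obtains the two operator identities
\eqstar{
P(t,s)\Ac_m(s) = \Gc_m(t,s)\, P(t,s), && P(t,s)\Gc_m(s,s') = \Gc_m(t,s')\, P(t,s), \qquad t\le s \le s'.
}
The first is immediate from the polynomial form \eqref{eq: a expansion} of $a_{\alpha,m}$. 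The second further uses the additivity of $\mathbf{m}$ and $\mathbf{C}$: pushing $P(t,s)$ through $\Mc(s,s') = z + \mathbf{m}(s,s') + \mathbf{C}(s,s')\nabla_z$ inside $\Gc_m(s,s')$ converts the $z$-factor into $\Mc(t,s)$, and the three resulting summands collapse to $\Mc(t,s')$.

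The base case $n=1$ follows at once: Duhamel, the first identity, and $P(t,s) u_0(s,\cdot)=u_0(t,\cdot)$ give $u_1(t,z) = \int_t^T \Gc_1(t,s)\, u_0(t,z)\, \dd s = \Lc_1(t,T) u_0(t,z)$. For the inductive step, assume $u_m(s,z) = \Lc_m(s,T) u_0(s,z)$ for every $m < n$ and $s \in [0,T]$. Substituting this into Duhamel, applying the first commutation identity once and the second one iteratively pushes $P(t,s_1)$ through every $\Gc$ factor appearing in $\Ac_h(s_1)\Lc_{n-h}(s_1,T)$, converting each first time argument from $s_1$ to $t$, after which $P(t,s_1)u_0(s_1,z)=u_0(t,z)$ absorbs the propagator; the outcome is
\eqstar{
u_n(t,z) = \sum_{h=1}^n \int_t^T \dd s_1\, \Gc_h(t,s_1)\, \Lc_{n-h}(t;s_1,T)\, u_0(t,z),
}
where $\Lc_m(t;s_1,T)$ has the form of \eqref{eq: term Ln} but with outer integration starting at $s_1$ instead of $t$, and with the convention $\Lc_0 := \mathrm{Id}$.

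Matching this with $\Lc_n(t,T) u_0(t,z)$ then reduces to the combinatorial observation that splitting the compositions $i\in I_{n,h}$ in \eqref{eq: term Ln} according to the value $i_1=h\in\{1,\ldots,n\}$ of their first component, and relabeling the remaining $(i_2,\ldots,i_h)$ as a composition of $n-h$, yields exactly the recursion $\Lc_n(t,T)=\sum_{h=1}^n \int_t^T \Gc_h(t,s_1)\Lc_{n-h}(t;s_1,T)\,\dd s_1$, which closes the induction. I expect the real obstacle to lie in the commutation argument of the second paragraph: although each identity is short, it juggles three non-commuting ingredients simultaneously---polynomial multiplication in $z$, the operators $D_z^\alpha$, and time-dependent coefficients---and relies essentially on the symmetry of $\mathbf{C}$ (for pairwise commutativity of the $\Mc_i$, which legitimates the substitution $a_{\alpha,m}(s,z)\mapsto a_{\alpha,m}(s,\Mc(t,s))$) together with the additivity of $\mathbf{m}$ and $\mathbf{C}$ in their time arguments (for the clean collapse $\Mc(t,s)\mapsto \Mc(t,s')$). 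Without these structural features of the Gaussian propagator the closed form \eqref{eq: term un} would not be available.
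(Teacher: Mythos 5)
Your argument is correct, but note that the paper offers no proof of its own here: it simply cites Theorem 3.2 of \citet{lorig_analytical_2015}, and your proof is essentially the one given there --- Duhamel's principle for \eqref{eq: cauchy expansion - n}, the Gaussian transfer identity $(z'_i - z_i - \mathbf{m}_i)\Gamma_0 = \sum_j \mathbf{C}_{ij}\partial_{z_j}\Gamma_0$ yielding the commutation relations $P(t,s)\Ac_m(s) = \Gc_m(t,s)P(t,s)$ and $P(t,s)\Gc_m(s,s') = \Gc_m(t,s')P(t,s)$, and an induction closed by splitting the compositions in $I_{n,h}$ according to their first part. The structural points you flag (symmetry of $\mathbf{C}$ for commutativity of the $\Mc_i$, additivity of $\mathbf{m}$ and $\mathbf{C}$ for the semigroup property and the collapse $\Mc(t,s)\mapsto\Mc(t,s')$) are exactly the ones the cited proof relies on.
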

We refer the reader to Theorem 3.2 of \citet{lorig_analytical_2015} for a proof of the above result.

\begin{remark}
\label{re: error}
Under some general assumptions (see \citet{lorig_analytical_2015}) on operator $\Ac (t)$, the following bound for the approximation error holds
\eqlnostar{eq: error}{
|u(t, z) - \bar{u}_N(t, z)| \le C (T-t)^{{N\over 2}+1}, && t\in [0,T),\ z\in \Rb^d,
}
where $C$ is a positive constant that only depends on $N$, and $\bar{u}_N(t, z)$ is the $N$-th order approximation, given as
\eqstar{
\bar{u}_N(t, z) := \sum_{n=0}^N u_n(t, z).
}
\end{remark}

\section{Approximation formula}
\label{sec: approximation formula}
Going back to the CDS spread approximation problem discussed in Section \ref{sec: cds spread}, we notice that the following terms in \eqref{eq: cds spread} lack a closed-form solution under the SSRD model
\eqstar{
\Eb \big[ \ee^{-\int_t^{T} (r_s + \lambda_s) \dd s} \big| \Fc_t \big], && \Eb \big[\ee^{-\int_t^T (r_s + \lambda_s) \dd s} \lambda_T \big| \Fc_t \big].
}
This is due to the fact that in the SSRD model, the short-term interest rate $r_t$ and default intensity $\lambda_t$ are assumed to follow correlated CIR processes. In order to get around this hurdle of unavailable closed-form solutions for the above two terms, \citet{brigo_credit_2005} approximated the distribution of correlated CIR processes using correlated Vasicek processes, a technique we call the \textit{Vasicek-mapping} technique. However, the above two terms also satisfy parabolic Cauchy problems of the type \eqref{eq: cauchy}. Thus, we can employ the asymptotic approximation technique in \citet{lorig_analytical_2015} to derive their higher-order approximations, which can potentially be more accurate than their corresponding approximations obtained from the \textit{Vasicek-mapping} technique.

To facilitate the analysis, we introduce functions $v(t, x, y, T)$ and $h(t, x, y, T)$, which are defined as follows
\begin{align}
v(t, x, y, T) &:= \Eb \big[\ee^{-\int_t^T (\ee^{-\alpha_1 s}x_s+\mathrm{e}^{-\alpha_2 s}y_s)\dd s} \big| x_t=x, y_t=y \big], && t \in [0,T],
\label{eq: term v}\\
h(t, x, y, T) &:= \Eb \big[\ee^{-\int_t^T(\ee^{-\alpha_1 s} x_s+\ee^{-\alpha_2 s} y_s) \dd s } y_T \big|x_t=x, y_t=y\big], && t \in [0,T],
\label{eq: term h}
\end{align}
with the transformations 
\eqlnostar{eq: term trans}{
x_t := \ee^{\alpha_1 t} r_t, && y_t := \ee^{\alpha_2 t} \lambda_t.
}
Therefore, the SSRD model \eqref{eq: ssrd} is transformed to
\eqlnostar{eq: ssrd t}{
\dd x_t &= \alpha_1\beta_1\ee^{\alpha_1 t}\dd t+\sigma_1\sqrt{\ee^{\alpha_1 t}x_t}\dd W_t^{(1)}, \nonumber \\
\dd y_t &= \alpha_2\beta_2\ee^{\alpha_2 t}\dd t+\sigma_2\sqrt{\ee^{\alpha_2 t}y_t}\dd W_t^{(2)},\\
\dd W^{(1)}_t \dd W^{(2)}_t &= \rho \dd t \nonumber.
}
According to the \textit{Feynman-Kac principle}, \eqref{eq: term v} and \eqref{eq: term h} are the solutions to the following Cauchy problems, respectively,
\eqlnostar{eq: eqvh}{
(\partial_t + \Ac (t)) v(t,x,y,T)=0, && v(T,x,y,T) = 1,\\
(\partial_t + \Ac (t)) h(t,x,y,T)=0, && h(T,x,y,T) = y,
}
where the operator $\Ac(t)$ is given by
\eqlnostar{eq: term AA}{
\Ac (t) =a(t,x,y)\partial_{xx}+b(t,x,y)\partial_{yy}+c(t,x,y)\partial_{xy}+\kappa(t,x,y)\partial_{x}+k(t,x,y)\partial_{y}+\gamma(t,x,y),
}
with the coefficients $a$, $b$, $c$, $\kappa$, $k$, and $\gamma$ specified accordingly as,
\eqlnostar{eq: coefficient function}{
&a(t,x,y) := {1\over 2} \sigma_1^2 \ee^{\alpha_1 t}x, && b(t,x,y) := {1\over 2} \sigma_2^2 \ee^{\alpha_2 t}y, &&c(t,x,y):= \rho\sigma_1\sigma_2 \ee^{\frac{(\alpha_1+\alpha_2)t}{2}} x^{1\over 2} y^{1\over 2},\nonumber \\
&\kappa(t,x,y):= \alpha_1\beta_2 \ee^{\alpha_1 t}, && k(t,x,y):= \alpha_2\beta_2 \ee^{\alpha_2 t}, &&\gamma(t,x,y):= -(\ee^{-\alpha_1 t}x+\ee^{-\alpha_2 t}y).
}

\begin{remark}
\label{re: v0h0}
Given the transformation \eqref{eq: term trans}, functions $v$ and $h$ satisfy the following equations
\eqstar{
v(0, x, y, T) = \Eb \big[ \ee^{-\int_0^{T} (r_s + \lambda_s) \dd s} \big], && h(0, x, y, T) = \ee^{\alpha_2 T} \Eb \big[\ee^{-\int_0^T (r_s + \lambda_s) \dd s} \lambda_T \big].
}
\end{remark}

\begin{proposition}
\label{pro: cds n}
Under the framework established in Section \ref{sec: model setting}, \ref{sec: asymptotic approximation technique} and \ref{sec: approximation formula}, the N-th order approximation formula for the CDS spread at time $0$ derived in \eqref{eq: cds spread0}, is given as follows
\eqlnostar{eq: cds approximation}{
R_N = \frac{(1-\zeta) \int_0^T \ee^{-\alpha_2 s} \sum_{n=0}^N h_n(0, x, y, s) \dd s}{ \int_0^T \ee^{-\alpha_2 s} \sum_{n=0}^N h_n(0, x, y, s) (s - t_{N(s)-1}) \dd s 
+ \sum_{i=1}^M (t_i - t_{i-1}) \sum_{n=0}^N v_n(0, x, y, t_i)},
}
where
\begin{align}
v_0(t,x,y,s) &= \ee^{-x \psi(-\alpha_1,t,s) - \alpha_1\beta_1 \Theta(-\alpha_1, \alpha_1, t, s)} \ee^{-y \psi(-\alpha_2,t,s) - \alpha_2\beta_2 \Theta(-\alpha_2, \alpha_2, t, s)}, && 0 \le t \le s \le T,
\label{eq: term v0}\\
h_0(t,x,y,s) &= v_0(t,x,y,s)\big(y + \alpha_2 \beta_2 \psi(\alpha_2, t,s) \big), && 0 \le t \le s \le T,
\label{eq: term h0}
\end{align}
with
\eqstar{\psi(\alpha,t_1,t_2) &:=\int_{t_1}^{t_2} \ee^{\alpha s} \dd s,\quad 0\le t_1< t_2\le T,\\
\Theta(\alpha,\beta, t_1, t_2) &:= \int_{t_1}^{t_2} \ee^{\alpha s} \psi(\beta, t, s) \dd s,\quad t\le t_1< t_2\le T.
}
\end{proposition}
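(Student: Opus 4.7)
The plan is to reduce the CDS spread \eqref{eq: cds spread0} to an expression involving the functions $v$ and $h$, apply the asymptotic scheme of Section \ref{sec: asymptotic approximation technique} to $v$ and $h$, and finally identify the zeroth-order contributions $v_0$ and $h_0$ in closed form via Duhamel's formula \eqref{eq: term u0}.

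First, I would rewrite \eqref{eq: cds spread0} in terms of $v$ and $h$. At $t=0$, the transformation \eqref{eq: term trans} yields $x=r_0$ and $y=\lambda_0$, so Remark \ref{re: v0h0} gives $\Eb\big[\ee^{-\int_0^{t_i}(r_s+\lambda_s)\dd s}\big]=v(0,x,y,t_i)$ and $\Eb\big[\ee^{-\int_0^s(r_u+\lambda_u)\dd u}\lambda_s\big]=\ee^{-\alpha_2 s}h(0,x,y,s)$. Substituting both identities into \eqref{eq: cds spread0} produces a CDS spread written purely in terms of $v(0,x,y,\cdot)$ and $h(0,x,y,\cdot)$.

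Second, by the Feynman-Kac principle, $v$ and $h$ solve the parabolic Cauchy problems \eqref{eq: eqvh} driven by the operator $\Ac(t)$ of \eqref{eq: term AA}. Applying Section \ref{sec: asymptotic approximation technique}, I would Taylor-expand each coefficient in \eqref{eq: coefficient function} around a reference path $\bar z(\cdot)$ via \eqref{eq: a expansion}, split $\Ac=\sum_{n\ge 0}\Ac_n$ as in \eqref{eq: A expansion}, and correspondingly write $v=\sum_{n\ge 0}v_n$, $h=\sum_{n\ge 0}h_n$, where $v_n$ and $h_n$ satisfy the recursive problems \eqref{eq: cauchy expansion - 0}-\eqref{eq: cauchy expansion - n} with terminal conditions $v_0(s,\cdot,\cdot,s)=1$, $h_0(s,\cdot,\cdot,s)=y$ and zero terminal data for $n\ge 1$. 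Truncating both series at order $N$ and inserting the truncations into the reformulation from the first step immediately produces \eqref{eq: cds approximation}.

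Third, to derive \eqref{eq: term v0} and \eqref{eq: term h0}, I would take the reference path $\bar z(\cdot)=(\bar x(\cdot),\bar y(\cdot))$ to be the deterministic characteristic of the state-independent drift starting from $(x,y)$ at time $t$, namely $\bar x(u)=x+\alpha_1\beta_1\psi(\alpha_1,t,u)$ and $\bar y(u)=y+\alpha_2\beta_2\psi(\alpha_2,t,u)$. With this choice the zeroth-order potential is $\gamma_0(u)=-\bigl(\ee^{-\alpha_1 u}\bar x(u)+\ee^{-\alpha_2 u}\bar y(u)\bigr)$. Applying \eqref{eq: term u0} with $\varphi\equiv 1$ yields $v_0(t,x,y,s)=\ee^{\int_t^s\gamma_0(u)\dd u}$, and carrying out the integration using the definitions of $\psi$ and $\Theta$ gives \eqref{eq: term v0}. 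For $h_0$ the terminal datum is $\varphi(\eta)=\eta_2$, so \eqref{eq: term u0} yields $h_0(t,x,y,s)=\ee^{\int_t^s\gamma_0(u)\dd u}\bigl(y+m_y(t,s)\bigr)$ with $m_y(t,s)=\int_t^s k_0(u)\dd u=\alpha_2\beta_2\psi(\alpha_2,t,s)$, reproducing \eqref{eq: term h0}.

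The main obstacle is choosing the reference path $\bar z$ correctly: a naive choice such as $\bar z\equiv 0$ or $\bar z\equiv(x,y)$ constant fails to generate the $\Theta$-terms present in \eqref{eq: term v0}. The deterministic drift-driven characteristic is the natural, essentially forced, selection that captures the affine structure of the potential $\gamma$; once $\bar z$ is fixed, Duhamel's formula delivers the closed-form $v_0$ and $h_0$, and the remainder of the argument is the bookkeeping of substituting the truncated expansions into the CDS spread identity obtained in the first step.
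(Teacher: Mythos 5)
Your proposal is correct and follows essentially the same route as the paper: reduce \eqref{eq: cds spread0} to $v$ and $h$ via Remark \ref{re: v0h0}, set up the recursive Cauchy problems \eqref{eq: eqve}--\eqref{eq: eqhe}, and evaluate the zeroth-order terms by Duhamel's formula with the time-dependent expansion point $\bar{x}(t,s)=x+\alpha_1\beta_1\psi(\alpha_1,t,s)$, $\bar{y}(t,s)=y+\alpha_2\beta_2\psi(\alpha_2,t,s)$, which is exactly the paper's choice $(\bar{x}_{\text{fixed}},\bar{y}_{\text{fixed}})=(x,y)$ in \eqref{eq: term bar}. Your remark that a constant expansion point would miss the $\Theta$-terms correctly identifies why this choice is forced.
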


\begin{proof}
From Remark \ref{re: error} and \ref{re: v0h0}, we know that the N-th order approximation formulas for $\Eb \big[ \ee^{-\int_0^{T} (r_s + \lambda_s) \dd s} \big]$ and $\Eb \big[\ee^{-\int_0^T (r_s + \lambda_s) \dd s} \lambda_T \big]$ are given by
\eqlnostar{eq: full}{
\Eb \big[ \ee^{-\int_0^{T} (r_s + \lambda_s) \dd s} \big] &= v(0, x, y, T) = \sum_{n=0}^N v_n(0, x, y, T) + \mathcal{O} (T^{{N\over 2}+1}),\\
\Eb \big[\ee^{-\int_0^T (r_s + \lambda_s) \dd s} \lambda_T \big] &= \ee^{-\alpha_2 T} h(0, x, y, T) = \ee^{-\alpha_2 T} \sum_{n=0}^N h_n(0, x, y, T) + \mathcal{O} (T^{{N\over 2}+1}).
}
Analogous to \eqref{eq: cauchy expansion - 0} and \eqref{eq: cauchy expansion - n}, $(v_n (t, x, y, T))_{n\ge 0}$ and $(h_n(t, x, y, T))_{n\ge 0}$ are the solutions of the following Cauchy problems
\eqlnostar{eq: eqve}{
(\partial_t + \Ac_0(t)) v_0(t, x, y, T) &= 0, && v_0(T, x, y, T) = 1, \nonumber\\
(\partial_t + \Ac_0(t)) v_n(t, x, t, T) &= -\sum_{h=1}^\infty \Ac_h (t) v_{n-h}(t, x, y, T), && v_n(T, x, y, T) = 0,
}
and
\eqlnostar{eq: eqhe}{
(\partial_t + \Ac_0(t)) h_0(t, x, y, T) &= 0, && h_0(T, x, y, T) = y, \nonumber\\
(\partial_t + \Ac_0(t)) h_n(t, x, t, T) &= -\sum_{h=1}^\infty \Ac_h (t) h_{n-h}(t, x, y, T), && h_n(T, x, y, T) = 0,
}
where $\Ac_0(t)$ is expressed as
\eqstar{
\Ac_0 (t) =a(t,\bar{x},\bar{y})\partial_{xx}+b(t,\bar{x},\bar{y})\partial_{yy}+c(t,\bar{x},\bar{y})\partial_{xy}+\kappa(t,\bar{x},\bar{y})\partial_{x}+k(t,\bar{x},\bar{y})\partial_{y}+\gamma(t,\bar{x},\bar{y}).
}
We define the time-dependent variables $\bar{x}$ and $\bar{y}$ as follows
\eqlnostar{eq: term bar}{
\bar{x}(t, s) := \bar{x}_{\text{fixed}} + \alpha_1\beta_1 \psi(\alpha_1, t, s), && \bar{y}(t, s) := \bar{y}_{\text{fixed}} + \alpha_2\beta_2\psi(\alpha_2, t, s).
}
with $\bar{x}_{\text{fixed}}$ and $\bar{y}_{\text{fixed}}$ are constants, and function $\psi$ is defined as
\eqstar{
\psi(\alpha,t_1,t_2) :=\int_{t_1}^{t_2} \ee^{\alpha s} \dd s,\quad 0\le t_1< t_2\le T.
}
Therefore, the zeroth-order approximation of $v(t, x, y, T)$, with $(\bar{x}_{\text{fixed}}, \bar{y}_{\text{fixed}}) = (x, y)$, is derived as follows
\eqstar{
v_0(t,x,y,T) &= \ee^{\int_t^T \gamma_{0,0}(s)\dd s} \int_{\mathbb R^2} \Gamma_0(t, x, y, T,\eta) \dd \eta,
}
where
\eqstar{
\gamma_{0,0} (s) := \gamma(s, \bar{x}(t, s), \bar{y}(t, s)) =  -\ee^{-\alpha_1 t} \bar{x}(t, s)-\ee^{-\alpha_2 t} \bar{y}(t, s).
}
Since $\Gamma_0$ now is a two-dimensional Gaussian density function, the zeroth-order approximation formula for $v(t, x, y, T)$ is given by
\eqlnostar{eq: v0}{
v_0(t,x,y,T) = \ee^{\int_t^T \gamma_{0,0}(s) \dd s} = \ee^{-x \psi(-\alpha_1,t,T) - \alpha_1\beta_1 \Theta(-\alpha_1, \alpha_1, t, T)} \ee^{-y \psi(-\alpha_2,t,T) - \alpha_2\beta_2 \Theta(-\alpha_2, \alpha_2, t, T)}.
}
Similarly, the zeroth-order approximation formula for $h(t, x, y, T)$ is deduced as follows
\eqlnostar{eq: h0}{
h_0(t,x,y,T) = \ee^{\int_t^T \gamma(s)ds} \int_{\mathbb R^2} \dd \eta \Gamma_0(t, x, y, T,\xi,\omega) y  = v_0(t,x,y,T)\big(y + \alpha_2 \beta_2 \tau(\alpha_2, t,T) \big).
}
This completes the proof. The explicit expressions for the higher-order approximation formulas for both $v_n(t, x, y, T)$ and $h_n(t, x, y, T)$, along with the detailed derivation process, are presented in Appendix \ref{sec: 2nd expression}.
\end{proof}

Analogously, we can derive a closed-form approximation formula for the risk-neutral survival probability, as defined in \eqref{eq: survival probability}, under the transformed SSRD model \eqref{eq: ssrd t}, and use this formula to estimate survival probabilities inferred from the market CDS spreads.

\begin{proposition}
\label{pro: survival probability - PDE}

Let us define the function $\widetilde{Q}(t, y, T):= \Eb \big[\ee^{-\int_t^T \ee^{-\alpha_2 s} y_s \dd s}\big| y_t=y \big]$, which can be seen as the survival probability \eqref{eq: survival probability} after the transformation in \eqref{eq: term trans}. The explicit expression for the approximation formula of the function $\widetilde{Q}$, with $\bar{y}_{\text{fixed}} = y$, is given as
\eqlnostar{eq: survival probability expansion}{
\widetilde{Q}(t, y, T) = \widetilde{Q}_0(t, y, T) + \widetilde{Q}_1(t, y, T), && t \in [0,T],
}
where
\begin{align}
\widetilde{Q}_0(t, y, T) &= \ee ^{-y \psi(-\alpha_2,t,T) - \alpha_2\beta_2 \Theta(-\alpha_2, \alpha_2, t, T)}, \label{eq: Q0}\\
\widetilde{Q}_1(t, y, T) &= \widetilde{Q}_0(t, y, T) \sigma_2^2 \psi(-\alpha_2, t, T) \int_t^T \ee^{-\alpha_2 s} \left(y \psi(\alpha_2, t, s) + \alpha_2\beta_2\Theta(\alpha_2, \alpha_2, t, s)\right) \dd s\\
&+ \widetilde{Q}_0(t, y, T) \sigma_2^4 \psi(-\alpha_2, t, T)^3 \int_t^T \ee^{\alpha_2 s} \left(y \psi(\alpha_2, t, s) + \alpha_2\beta_2\Theta(\alpha_2, \alpha_2, t, s)\right) \dd s. \label{eq: Q1}
\end{align}
\end{proposition}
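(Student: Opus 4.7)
The plan is to treat $\widetilde{Q}(t,y,T)$ as the Feynman--Kac solution of a one-dimensional parabolic Cauchy problem
\eqstar{
(\partial_t + \Ac(t))\widetilde{Q}(t,y,T) = 0, \qquad \widetilde{Q}(T,y,T)=1,
}
with generator read directly from the $y$-dynamics in \eqref{eq: ssrd t},
\eqstar{
\Ac(t) = \tfrac{1}{2}\sigma_2^2 \ee^{\alpha_2 t} y\,\partial_{yy} + \alpha_2\beta_2 \ee^{\alpha_2 t}\partial_y - \ee^{-\alpha_2 t} y.
}
This fits exactly into the framework of Section \ref{sec: asymptotic approximation technique} with $d=1$ and terminal data $\varphi\equiv 1$. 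I would set $\bar{y}_{\text{fixed}} = y$, so that $\bar{y}(t,s) = y + \alpha_2\beta_2\psi(\alpha_2,t,s)$, and Taylor-expand the coefficients in the spatial variable around $\bar{y}(t,\cdot)$ as in \eqref{eq: a expansion}. Because the diffusion, drift and potential are at most affine in $y$, every such expansion terminates at the first order; consequently $\Ac_n\equiv 0$ for $n\ge 2$, and the series \eqref{eq: survival probability expansion} collapses to just $\widetilde{Q}_0$ and $\widetilde{Q}_1$.

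For the zeroth-order term I would apply formula \eqref{eq: term u0} with $\varphi\equiv 1$. The Gaussian density $\Gamma_0$ integrates to unity against the constant terminal datum, so the expression reduces to $\widetilde{Q}_0(t,y,T) = \exp\!\bigl(\int_t^T \gamma_0(s)\,\dd s\bigr)$ with $\gamma_0(s) = -\ee^{-\alpha_2 s}\bar{y}(t,s)$. Splitting $\bar{y}(t,s)$ into its constant and mean-reversion contributions and identifying the resulting time integrals with $\psi$ and $\Theta$ then delivers \eqref{eq: Q0} directly.

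For the first-order term I would invoke Theorem \ref{thm: un} with $n=h=1$, which gives $\widetilde{Q}_1(t,y,T) = \int_t^T \Gc_1(t,s)\widetilde{Q}_0(t,y,T)\,\dd s$. Since $k$ is state-independent, only $b_1$ and $\gamma_1$ contribute to $\Gc_1$, and each is proportional to $\Mc(t,s)-\bar{y}(t,s)$. The key observation is that $\mathbf{m}(t,s) = \alpha_2\beta_2\psi(\alpha_2,t,s) = \bar{y}(t,s)-y$, so when $\Gc_1(t,s)$ acts on $\widetilde{Q}_0$ the multiplicative part of $\Mc(t,s)-\bar{y}(t,s)$ cancels and only the $\mathbf{C}(t,s)\partial_y$ component survives. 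Combining this with the exponential-affine form of $\widetilde{Q}_0$, which yields $\partial_y^k \widetilde{Q}_0 = (-\psi(-\alpha_2,t,T))^k \widetilde{Q}_0$, reduces $\Gc_1(t,s)\widetilde{Q}_0$ to $\widetilde{Q}_0$ multiplied by an explicit function of $s$. Substituting $\mathbf{C}(t,s) = \sigma_2^2[y\,\psi(\alpha_2,t,s) + \alpha_2\beta_2\Theta(\alpha_2,\alpha_2,t,s)]$ and integrating in $s$ then produces the two summands in \eqref{eq: Q1}.

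The main obstacle I anticipate is not conceptual but notational: the operator $\Mc(t,s)$ mixes multiplication by the spatial variable with $\mathbf{C}(t,s)\partial_y$, so the subtraction of $\bar{y}(t,s)$ must be performed \emph{before} the differentiation is resolved, and the dependence of $\mathbf{C}(t,s)$ on $y$ through the choice $\bar{y}_{\text{fixed}}=y$ must be retained throughout. Once the cancellation $\mathbf{m}(t,s)=\bar{y}(t,s)-y$ is exploited and the derivatives of the exponential-affine $\widetilde{Q}_0$ are substituted, the rest of the computation is a routine evaluation of integrals against $\ee^{\pm\alpha_2 s}$ that collapse into the $\psi$ and $\Theta$ notation.
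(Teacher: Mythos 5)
Your proposal follows essentially the same route as the paper: it identifies $\widetilde{Q}$ as the solution of the one-dimensional Cauchy problem with generator $\widetilde{\Ac}(t) = \tfrac{1}{2}\sigma_2^2 \ee^{\alpha_2 t} y\,\partial_{yy} + \alpha_2\beta_2 \ee^{\alpha_2 t}\partial_y - \ee^{-\alpha_2 t} y$, invokes the affine structure of the coefficients to truncate the expansion after the first order, obtains $\widetilde{Q}_0$ from \eqref{eq: term u0} with constant terminal datum, and computes $\widetilde{Q}_1 = \Lc_1(t,T)\widetilde{Q}_0$ via Theorem \ref{thm: un}, using the cancellation $\Mc(t,s)-\bar{y}(t,s)=\mathbf{C}(t,s)\partial_y$ and the exponential-affine derivatives of $\widetilde{Q}_0$. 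This is the paper's argument, worked out in somewhat greater detail.
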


\begin{proof}
It is known that the function $\widetilde{Q}$ satisfy the following Cauchy problem
\eqstar{
(\partial_t + \widetilde{\Ac} (t)) \widetilde{Q}(t, y, T) = 0, && \widetilde{Q}(T, y, T) = 1,
}
with the operator $\widetilde{\Ac} (t)$ defined as
\eqstar{
\widetilde{\Ac} (t) := {1\over 2} \sigma_2^2 \ee^{\alpha_2 t} y \d_{yy} + \alpha_2\beta_2 \ee^{\alpha_2 t} \d_y - \ee^{-\alpha_2 t} y.
}
Theorem \ref{thm: un} and Remark \ref{re: error} yield the $N$-th order approximation of the following form
\eqstar{
\widetilde{Q}(t, y, T) = \widetilde{Q}_0(t, y, T) + \sum_{n = 1}^N \Lc_n (t, T) \widetilde{Q}_0(t, y, T) + \mathcal{O} (T^{{N\over 2}+1}).
}
Given that the partial derivatives of coefficients in $\widetilde{\Ac} (t)$ are zero when $n>1$, $\widetilde{Q}$ is confined to the first-order approximation, i.e., $Q(t, y, T) = Q_0(t, y, T) + Q_1(t, y, T)$. Applying \eqref{eq: term u0} and \eqref{eq: term Ln}, we obtain the explicit expression for these two terms as presented in \eqref{eq: Q0} and \eqref{eq: Q1}.
\end{proof}

\section{Calibration approach}
\label{sec: calibration approach}

For estimating the market CDS spread using our approximation formula \eqref{eq: cds approximation}, we first need to calibrate the parameters in the transformed SSRD model \eqref{eq: ssrd t} to the market data. Our proposed calibration procedure involves the following three steps:

\begin{itemize}
\item \textit{Step 1: Calibration of the interest rate model.}\newline
We calibrate the parameters of the interest rate process $r_t$ in the SSRD model \eqref{eq: ssrd} separately to the ZCB prices generated from the interest rate swap curve, such as the London interbank offered rate (LIBOR) swap curve. Since $r_t$ follows a CIR process, the ZCB price formula is similar to the one stated in Remark \ref{re: survivial probability - CIR}, which is given as
\eqlnostar{eq: ZCB prices - CIR}{
P(t, T) := \Eb \big[\ee^{-\int_t^T r_s \dd s} \big] = \widehat{A}(t,T)\ee^{-\widehat{B}(t,T) r_t}, && t \in [0,T],
}
where
\eqstar{
\widehat{A} (t,T) &:= \left( \frac{2\widehat{h} \ee^{(\alpha_1 + \widehat{h})(T-t)\over 2}}{2\widehat{h} + (\alpha_1 + \widehat{h})(\ee^{\widehat{h} (T-t)} - 1)} \right)^{2\alpha_1\beta_1 \over \sigma_1^2},\\
\widehat{B} (t,T) &:= \frac{2 (\ee^{\widehat{h} (T-t)} - 1)}{2\widehat{h} + (\alpha_1 + \widehat{h})(\ee^{\widehat{h} (T-t)} - 1)},\\
\widehat{h} &:= \sqrt{\alpha_1^2 + 2\sigma_1^2}.
}
This calibration step can be formulated as an optimisation problem where the aim is to minimise the difference between the model given  prices and the market data. We employ the unweighted nonlinear least squares method, for which the objective function is formalised as follows
\eqlnostar{eq: Non-LSE - ZCB}{
\min_{(\alpha_1, \beta_1, \sigma_1)}  F(\alpha_1, \beta_1, \sigma_1) = \sum_{i=1}^{N_\text{mkt}} |P (0,T_i) - \widehat{P}_i|^2, \, \text{s.t. }  \alpha_1, \sigma_1 >0,\ 2\alpha_1\beta_1 > \sigma_1^2.
}
Here, $N_\text{mkt}$ is the number of ZCB prices from the market and $\widehat{P}_i$ is the market ZCB price observed with maturity $T_i$. We use the \textit{Nelder-Mead} method for solving the above optimisation problem. After completing this step, we obtain the calibrated values for parameters  $\alpha_1$, $\beta_1$ and $\sigma_1.$  $r_0$ is observed directly from the market.

\item \textit{Step 2: Matching the model ZCB price with its approximation.} \newline
Since we use the transformed SSRD model \eqref{eq: ssrd t} to generate the approximation formula for the CDS spread, we need to ensure that the ZCB prices derived from the models for $r_t$ and $x_t$ are consistent. This is an important calibration step, as it ensures the accuracy of our approximation formula for CDS spreads. Recall that in the transformed SSRD model, interest rate $r_t$ is transformed into $\ee^{-\alpha_1 t} x_t$. According to Proposition \ref{pro: survival probability - PDE}, we can express the ZCB price in terms of $x_t$, with $\bar{x}_{\text{fixed}} = x$, as follows
\eqlnostar{eq: ZCB prices - PDE}{
\widetilde{P}(t, x, T):= \Eb \big[\ee^{-\int_t^T \ee^{-\alpha_1 s} x_s \dd s} \big| x_t=x \big] = \widetilde{P}_0(t, y, T) + \widetilde{P}_1(t, y, T), && t \in [0,T],
}
where
\eqstar{
\widetilde{P}_0(t, x, T) &= \ee ^{-x \psi(-\alpha_1,t,T) - \alpha_1\beta_1 \Theta(-\alpha_1, \alpha_1, t, T)}, \\
\widetilde{P}_1(t, x, T) &= \widetilde{P}_0(t, x, T) \sigma_1^2 \psi(-\alpha_1, t, T) \int_t^T \ee^{-\alpha_1 s} \left(x \psi(\alpha_1, t, s) + \alpha_1\beta_1\Theta(\alpha_1, \alpha_1, t, s)\right) \dd s\\
&+ \widetilde{P}_0(t, x, T) \sigma_1^4 \psi(-\alpha_1, t, T)^3 \int_t^T \ee^{\alpha_1 s} \left(x \psi(\alpha_1, t, s) + \alpha_1\beta_1\Theta(\alpha_1, \alpha_1, t, s)\right) \dd s.
}
To maintain consistency in the ZCB prices across the two representations, we introduce a volatility parameter $\widehat{\sigma}_1$ such that
\eqlnostar{eq: particular volatility}{
P(0, T)^{(\alpha_1, \beta_1, \sigma_1)} = \widetilde{P}(0, x_0, T)^{(\alpha_1, \beta_1, \widehat{\sigma}_1)}.
}
In the above, $P(0, T)^{(\alpha_1, \beta_1, \sigma_1)} $ is the ZCB price generated from the original SSRD model \eqref{eq: ssrd}, while $\widetilde{P}(0, x_0, T)^{(\alpha_1, \beta_1, \widehat{\sigma}_1)}$ is the ZCB price generated the transformed SSRD model \eqref{eq: ssrd} with volatility parameter $\sigma_1$ being replaced with $\widehat{\sigma}_1$. This step is similar to the ``particular Vasicek volatility'' calculation in \citet[Page 16]{brigo_credit_2005}. Given that volatility is a positive constant, in the above equation we can replace $\widehat{\sigma}^2_1$ with $\widetilde{\sigma}_1$ instead and solve a quadratic equation in $\widetilde{\sigma}_1$. If the quadratic equation does have solutions, we select the positive square root of the smaller positive $\widetilde{\sigma}_1$ as the volatility parameter $\widehat{\sigma}_1$. However, if the quadratic equation does not have a real solution, we choose the value of $\widehat{\sigma}_1$ that minimises the difference between $P(0, T)^{(\alpha_1, \beta_1, \sigma_1)}$ and $\widetilde{P}(0, x_0, T)^{(\alpha_1, \beta_1, \widehat{\sigma}_1)}$ as the particular volatility. 
From a practical viewpoint, the volatility parameter $\widehat{\sigma}_1$ can be interpreted as the time-averaged value of $\sigma_1\sqrt{r_t}$ over the interval $[0, T]$. We only compute and utilise the volatility parameter $\widehat{\sigma}_1$ corresponding to the longest maturity in the CDS spread dataset. For instance, if we use a CDS spread dataset covering maturities from 1 year to 10 years for the final calibration, we calculate and apply $\widehat{\sigma}_1$ for the 10-year term using \eqref{eq: particular volatility}. In the following calibration steps, we employ $\widehat{\sigma}_1$ as the volatility parameter for the interest rate model, instead of $\sigma_1$ obtained in the first step.

\item \textit{Step 3: Final calibration.} \newline
In this step, we calibrate the parameters  $\Xi := (\alpha_2, \beta_2, \sigma_2, \lambda_0, \rho)$ in the transformed SSRD model \eqref{eq: ssrd t} to the market data on CDS spread using the first-order approximation formula in \eqref{eq: cds approximation}. Using the first-order approximation formula ensures that this step is computationally quick. Moreover, we observe that no loss of precision occurred even when calculating the final estimate using a second-order approximation formula. In this step, we employ a weighted nonlinear least squares method to minimise the difference between the market CDS spreads and the model CDS spreads, where the weights are chosen to account for different levels of reliability or variability in the market data. The objective function is given as
\eqlnostar{eq: Non-LSE}{
\min_{\Xi}  F(\Xi) = \sum_{i=1}^{N_\text{mkt}} \omega_i |R_i - \widehat{R}_i|^2, \, \text{s.t.} \alpha_2, \sigma_2, \lambda_0 >0,\  -1\leq \rho \leq 1,
}
where $N_\text{mkt}$ denotes the number of market CDS spread values used, $\omega_i$ denotes the weight of each term, $R_i$ denotes the model CDS spread with maturity $T_i$, and $\widehat{R}_i$ denotes the market CDS spread observed with maturity $T_i$. In the case where we assume $\rho = 0$, the objective function simplifies to 
\eqstar{
\min_{\widehat{\Xi}} F(\widehat{\Xi}) = \sum_{i=1}^{N_\text{mkt}} \omega_i |R_i - \widehat{R}_i|^2, \, \text{s.t. } \alpha_2, \sigma_2, \lambda_0 >0.
}
When presenting the numerical results in Section \ref{sec: calibration results}, we specify the choice of weights depending on the available data.
\end{itemize}

After obtaining all the parameter estimates from the calibration procedure, we compute the estimate for the CDS spread using \eqref{eq: cds spread0} with $N=2$. We also compute the estimate of the risk-neutral survival probability $\widetilde{Q}$ by using \eqref{eq: survival probability expansion} and compare it with the survival probability $\widehat{Q}$ inferred from the market CDS spreads. Let $\widehat{Q}_i$ denote the survival probability up to time $T_i.$ Then, we have the following bootstrapping formula for the market survival probability
\eqlnostar{eq: market survival probability}{
\sum_{i=1}^j \widehat{R}_i (T_i - T_{i-1}) \widehat{Q}_i = \sum_{i=1}^j (1-\zeta) (\widehat{Q}_i-\widehat{Q}_{i-1}), && 1\le j \le N_{\text{mkt}},
}
where $\widehat{R}_i$ denotes the market CDS spread observed with maturity $T_i$. 

\section{Results}
\label{sec: calibration results}
For testing the accuracy of our approximation formula derived in Section \ref{sec: approximation formula}, we use CDS spread data with a maximum maturity of 10 years from Bloomberg, reported on 8 April 2024, for four major US and European banks. We also conduct a comparison study with the data reported in \citet{di_francesco_cds_2019}, where the main focus is on negative interest rate. In all cases presented here, as well as in Appendix \ref{sec: further calibration - Bloomberg} and Appendix \ref{sec: further calibration - JDCEV}, it is evident that our approximation formula provides an excellent fit to the market data. Furthermore, the survival probabilities estimated using our approximation are highly accurate when compared with those inferred from the market CDS spreads. Based on the reported computational time in Table \ref{tab: time2} and Table \ref{tab: time1}, we can also conclude that our approximation formula facilitates an efficient and fast calibration to the market data. All experiments are conducted on an Apple M2 Chip (8-core CPU, 8-core GPU, 16-core Neural Engine) and 8 GB Unified Memory (RAM).

\subsection{Bloomberg data}
We only report calibration results for JP Morgan Chase \& Co and HSBC Bank PLC here. The results for other banks - Citigroup Inc and Deutsche Bank AG - are reported in Appendix \ref{sec: further calibration - Bloomberg}. Since the CDS contracts for these entities are traded in different currencies, we calibrated the interest rate CIR model to different daily yields curves for ZCBs, selecting secured overnight financing rate (SOFR) swap curve for the US dollar and Euro short-term rate (ESTR) swap curve for the Euro to generate the corresponding zero rates and ZCB prices. The calibration results for the ZCB price are presented in Table \ref{tab: usd} and Table \ref{tab: euro}. It is evident that our calibration approach yields excellent results. 

\begin{table}[htbp]
\centering
\footnotesize
\captionsetup{font=footnotesize, skip = 5pt}
\begin{minipage}{.48\textwidth}
\centering
\begin{threeparttable}
\caption{Calibration results for ZCB prices (SOFR)}
\label{tab: usd}
\begin{tabularx}{.9\textwidth}
{c>{\centering\arraybackslash}X>{\centering\arraybackslash}X>{\centering\arraybackslash}X}
\toprule
Term & Market & Model & Rel. \\
(year) &  &  & Error\\
\midrule
1  & 0.95075  & 0.95263  & 0.1975\% \\ 
2  & 0.91163  & 0.91316  & 0.1683\% \\ 
3  & 0.87691  & 0.87759  & 0.0773\% \\ 
4  & 0.84472  & 0.84430  & 0.0502\% \\ 
5  & 0.81363  & 0.81262  & 0.1241\% \\ 
6  & 0.78327  & 0.78227  & 0.1275\% \\ 
7  & 0.75383  & 0.75311  & 0.0954\% \\ 
8  & 0.72512  & 0.72506  & 0.0086\% \\ 
9  & 0.69751  & 0.69806  & 0.0786\% \\ 
10 & 0.67080  & 0.67207  & 0.1887\% \\ 
\bottomrule
\end{tabularx}
\begin{tablenotes}
\item $\alpha_1 = 0.88422$, $\beta_1 = 0.03816$, $\sigma_1 = 0.09597$, $r_0 = 0.05384$.
\end{tablenotes}
\end{threeparttable}
\end{minipage}%
\hfill
\begin{minipage}{.48\textwidth}
\centering
\footnotesize
\begin{threeparttable}
\caption{Calibration results for ZCB prices (ESTR)}
\label{tab: euro}
\begin{tabularx}{.9\textwidth}
{c>{\centering\arraybackslash}X>{\centering\arraybackslash}X>{\centering\arraybackslash}X}
\toprule
Term & Market & Model & Rel. \\
(year) &  &  & Error\\
\midrule
1  & 0.96620  & 0.96857  & 0.2449\% \\ 
2  & 0.94187  & 0.94392  & 0.2174\% \\ 
3  & 0.92015  & 0.92103  & 0.0962\% \\ 
4  & 0.89965  & 0.89892  & 0.0810\% \\ 
5  & 0.87926  & 0.87739  & 0.2129\% \\ 
6  & 0.85861  & 0.85638  & 0.2601\% \\ 
7  & 0.83778  & 0.83587  & 0.2279\% \\ 
8  & 0.81658  & 0.81586  & 0.0880\% \\ 
9  & 0.79543  & 0.79632  & 0.1123\% \\ 
10 & 0.77382  & 0.77726  & 0.4439\% \\ 
\bottomrule
\end{tabularx}
\begin{tablenotes}
\item $\alpha_1 = 1.59549$, $\beta_1 = 0.02440$, $\sigma_1 = 0.18694$, $r_0 = 0.03963$.
\end{tablenotes}
\end{threeparttable}
\end{minipage}
\end{table}

As reported on Bloomberg, the maturity dates of CDS contracts are fixed on 20 June and 20 December each year. For a CDS contract traded on 8 April 2024, the 6-month maturity corresponds to 20 December 2024, and so on. Thus, we compute the actual maturity (in years) using the Actual/360 day-count convention when computing the volatility parameter $\widehat{\sigma}_1$ (\textit{Step 2} in Section \ref{sec: calibration approach}) and calibrating to the market data on CDS spread. The volatility parameter $\widehat{\sigma}_1$ are $2.214 \times 10^{-4}$ and $3.253 \times 10^{-7}$ for the US dollar and Euro, respectively, calculated based on the longest available actual maturity of CDS spreads.
For this dataset, we choose the following weight formula
\eqstar{
\omega_i := \frac{\frac{1}{|\text{Bid}_i - \text{Ask}_i|}}{\sum_i^{N_\text{mkt}} \frac{1}{|\text{Bid}_i - \text{Ask}_i|}},
}
as it provides the fastest and best calibration results across various different weight choices. Here, $\text{Bid}_i$ and $\text{Ask}_i$ correspond to the bid and ask values of the CDS spread with $i$-th maturity, respectively. We design this weight formula to assign greater importance to terms with higher liquidity, and are more likely to cause significant calibration error. Additionally, we compare the accuracy of our approximation for the CDS spread with the \textit{Vasicek-mapping} technique from \citet{brigo_credit_2005}. In a nutshell, their procedure is composed of three stages:
\begin{enumerate}
\item Calibrate the parameters in the interest rate and default intensity processes separately to the interest rates and credit markets, \textbf{assuming zero correlation between the two processes}.
\item Compute the ``particular Vasicek volatilities'', denoted as \textit{mapping volatilities}, by employing the \textit{mapping equation} from \citet[page 16]{brigo_credit_2005} for both interest rate and default intensity processes. 
\item Use the obtained parameters in the two-dimensional Vasicek model with customised correlation values to compute the estimates of CDS spread, referred to as the \textit{mapping approximations}. 
\end{enumerate}
When computing the \textit{mapping approximations}, the parameter values for the interest rate process remain the same as those presented in Table \ref{tab: usd} and Table \ref{tab: euro}. The \textit{mapping volatilities} of the interest rate model are 0.01915 and 0.02962 for the US dollar and Euro, respectively. The parameter values for the default intensity model are obtained by calibrating the survival probability formula to the market data. These values are reported in Table \ref{tab: vasicek}. The \textit{mapping volatilities} of the default intensity process are 0.01915 for JP Morgan Chase \& Co and 0.0056 for HSBC Bank PLC. The longest available CDS spread actual maturity is used when computing the \textit{mapping volatilities}. For the correlated case, we consider only extreme values of correlation in the \textit{mapping approximations} and report the better of the two results.

\begin{table}[htbp]
\centering
\footnotesize
\begin{threeparttable}
\captionsetup{font=footnotesize, skip = 5pt}
\caption{Parameters obtained from the calibration of default intensity process for computing \textit{mapping approximation}}
\label{tab: vasicek}
\begin{tabular}{lcccc}
\toprule
& $\alpha_2$  & $\beta_2$ & $\sigma_2$ &   $\lambda_0$ \\
\midrule
JP Morgan Chase \& Co & 0.05815 & 0.04013 & 0.06641 &  0.00145\\
HSBC Bank PLC & 0.10298 & 0.02465 & 0.06978 & 0.00090\\ 
\bottomrule
\end{tabular}
\end{threeparttable}
\end{table}

In Table \ref{tab: jp1} and Table \ref{tab: hsbc1}, we present the CDS spread estimates from the SSRD model using asymptotic approximation technique, referred as to the \textit{PDE approximation}, for two different reference entities, along with the \textit{mapping approximations}. In Table \ref{tab: jp2} and Table \ref{tab: hsbc2}, we also present both approximations while assuming no correlation between the interest rate and default intensity processes. As evident from the results in Table \ref{tab: jp1} - Table \ref{tab: hsbc2}, the \textit{mapping approximations} are highly inaccurate for short maturities, whereas the \textit{PDE approximations} remain accurate throughout the CDS spread curve. In general, the relative bid-ask values of CDS spread are higher for short-term maturities compared to maturities greater than 5 years, which leads to the poor performance of the \textit{Vasicek-mapping} technique. Overall, it can be seen that the estimates via our approximation technique are more accurate than \textit{mapping approximations}.

Furthermore, we can observe that the CDS spread approximations are significantly affected by changes in the correlation parameter, by comparing the results in Table \ref{tab: jp1} and Table \ref{tab: hsbc1} with the results in Table \ref{tab: jp2} and Table \ref{tab: hsbc2}. It is apparent that assuming no correlation between the interest rate and default intensity processes leads to poorer calibration performance in the SSRD model. Therefore, our approximation technique, which does not assume zero correlation between the two processes, not only ensures that the model aligns well with the market, but also outperforms the \textit{Vasicek-mapping} technique of \citet{brigo_credit_2005}. We also report the computational time of our calibration procedure in Table \ref{tab: time2}. Compared with the \textit{Vasicek-mapping} technique, which required an average of 30 seconds for calibration, our approach achieves significant accuracy without requiring too much extra time. In addition to JP Morgan Chase \& Co and HSBC Bank PLC, we also provide calibration results for other entities in Table \ref{tab: citi21} - Table \ref{tab: deut22}. For those entities as well, the estimates of CDS spread obtained from our approximation technique remain accurate when compared to the market data.

We also present the estimation results for the risk-neutral survival probability $\widetilde{Q}$ using \eqref{eq: survival probability expansion} in Table \ref{tab: check-jp1} - Table \ref{tab: check-hsbc2}. It is clear that our approximation formula yields very accurate estimates. However, we observe that the estimated relative errors slightly increase with longer maturities, a phenomenon attributed to the error bound of our approximation, which increases with time, as stated in Remark \ref{re: error}.

\begin{table}[H]
\centering
\footnotesize
\captionsetup{font=footnotesize, skip = 5pt}
\caption{Calibration results for JP Morgan Chase \& Co CDS spreads (correlated case)}
\label{tab: jp1}
\begin{threeparttable}
\begin{tabularx}{.8\textwidth}
{c >{\centering\arraybackslash}X >{\centering\arraybackslash}X >{\centering\arraybackslash}X p{.03cm} >{\centering\arraybackslash}X >{\centering\arraybackslash}X}
\toprule
& & \multicolumn{2}{c}{Mapping approximation} \tnote{a} & &\multicolumn{2}{c}{PDE approximation} \tnote{b}\\
\cmidrule{3-4} \cmidrule{6-7}
Term (year) & Market (bps) & Model (bps) & Rel. Error & & Model (bps)& Rel. Error \\
\midrule
0.7 & 16.669 & 13.467 & 19.2093\% &  & 16.369 & 1.7997\% \\
1.2 & 19.742 & 16.728 & 15.2669\% &  & 19.096 & 3.2722\% \\
1.7 & 21.768 & 19.909 & 8.5401\% &  & 21.798 & 0.1378\% \\
2.2 & 23.782 & 23.008 & 3.2546\% &  & 24.477 & 2.9224\% \\
2.7 & 26.369 & 26.025 & 1.3046\% &  & 27.133 & 2.8973\% \\
3.2 & 28.943 & 28.958 & 0.0518\% &  & 29.767 & 2.8470\% \\
3.8 & 31.720 & 32.371 & 2.0523\% &  & 32.900 & 3.7201\% \\
4.3 & 34.497 & 35.122 & 1.8118\% &  & 35.482 & 2.8553\% \\
4.8 & 37.522 & 37.792 & 0.7196\% &  & 38.040 & 1.3805\% \\
5.3 & 40.530 & 40.382 & 0.3652\% &  & 40.573 & 0.1061\% \\
5.8 & 43.331 & 42.894 & 1.0085\% &  & 43.080 & 0.5793\% \\
6.3 & 46.116 & 45.330 & 1.7044\% &  & 45.561 & 1.2035\% \\
6.8 & 48.916 & 47.691 & 2.5043\% &  & 48.016 & 1.8399\% \\
7.3 & 51.701 & 49.979 & 3.3307\% &  & 50.443 & 2.4332\% \\
7.8 & 53.763 & 52.197 & 2.9128\% &  & 52.843 & 1.7112\% \\
8.3 & 55.825 & 54.345 & 2.6511\% &  & 55.214 & 1.0945\% \\
8.8 & 57.887 & 56.426 & 2.5239\% &  & 57.556 & 0.5718\% \\
9.3 & 59.938 & 58.442 & 2.4959\% &  & 59.870 & 0.1135\% \\
9.8 & 62.001 & 60.394 & 2.5919\% &  & 62.153 & 0.2452\% \\
10.3 & 64.051 & 62.285 & 2.7572\% &  & 64.407 & 0.5558\% \\
\bottomrule
\end{tabularx}
% \smallskip
\begin{tablenotes}
\item[a] The correlation coefficient used is $\rho = -1$.
\item[b] $\alpha_2 = 0.00126$, $\beta_2 = 1.46292$, $\sigma_2 = 0.00039$, $\lambda_0 = 0.00207$, $\rho = -0.96$.
\end{tablenotes}
\end{threeparttable}

\bigskip

\centering
\footnotesize
% \captionsetup{font=footnotesize}
\caption{Calibration results for HSBC Bank PLC CDS spreads (correlated case)}
\label{tab: hsbc1}
\begin{threeparttable}
\begin{tabularx}{.8\textwidth}
{c >{\centering\arraybackslash}X >{\centering\arraybackslash}X >{\centering\arraybackslash}X p{.03cm} >{\centering\arraybackslash}X >{\centering\arraybackslash}X}
\toprule
& & \multicolumn{2}{c}{Mapping approximation} \tnote{a} & &\multicolumn{2}{c}{PDE approximation} \tnote{b}\\
\cmidrule{3-4} \cmidrule{6-7}
Term (year) & Market (bps) & Model (bps) & Rel. Error & & Model (bps)& Rel. Error \\
\midrule
0.7 & 14.610 & 10.536 & 27.8850\% & & 14.208 & 2.7515\% \\
1.2 & 16.770 & 14.008 & 16.4699\% & & 16.753 & 0.1014\% \\
1.7 & 19.293 & 17.355 & 10.0451\% & & 19.281 & 0.0622\% \\
2.2 & 21.802 & 20.573 & 5.6371\% & & 21.793 & 0.0413\% \\
2.7 & 23.863 & 23.664 & 0.8339\% & & 24.288 & 1.7810\% \\
3.2 & 25.912 & 26.632 & 2.7786\% & & 26.766 & 3.2958\% \\
3.8 & 29.049 & 30.036 & 3.3977\% & & 29.717 & 2.2996\% \\
4.3 & 32.185 & 32.746 & 1.7430\% & & 32.155 & 0.0932\% \\
4.8 & 34.607 & 35.345 & 2.1325\% & & 34.574 & 0.0954\% \\
5.3 & 37.015 & 37.839 & 2.2261\% & & 36.973 & 0.1135\% \\
5.8 & 39.809 & 40.231 & 1.0601\% & & 39.352 & 1.1480\% \\
6.3 & 42.588 & 42.526 & 0.1456\% & & 41.711 & 2.0593\% \\
6.8 & 45.382 & 44.728 & 1.4411\% & & 44.049 & 2.9373\% \\
7.3 & 48.161 & 46.842 & 2.7387\% & & 46.365 & 3.7292\% \\
7.8 & 49.824 & 48.870 & 1.9147\% & & 48.660 & 2.3362\% \\
8.3 & 51.488 & 50.817 & 1.3032\% & & 50.932 & 1.0799\% \\
8.8 & 53.152 & 52.686 & 0.8767\% & & 53.182 & 0.0564\% \\
9.3 & 54.806 & 54.481 & 0.5930\% & & 55.409 & 1.1002\% \\
9.8 & 56.470 & 56.205 & 0.4693\% & & 57.613 & 2.0241\% \\
10.3 & 58.124 & 57.860 & 0.4542\% & & 59.793 & 2.8714\% \\
\bottomrule
\end{tabularx}
% \smallskip
\begin{tablenotes}
\item[a] The correlation coefficient used is $\rho = -1$.
\item[b] $\alpha_2 = 0.00433$, $\beta_2 = 0.39790$, $\sigma_2 = 0.00006$, $\lambda_0 = 0.00176$, $\rho = -0.45395$.
\end{tablenotes}
\end{threeparttable}

\end{table}
\begin{table}[H]
\centering
\footnotesize
\captionsetup{font=footnotesize, skip = 5pt}
\begin{threeparttable}
\caption{Calibration results for JP Morgan Chase \& Co CDS spreads (uncorrelated case)}
\label{tab: jp2}
\begin{tabularx}{.8\textwidth}
{c >{\centering\arraybackslash}X >{\centering\arraybackslash}X >{\centering\arraybackslash}X p{.05cm} >{\centering\arraybackslash}X >{\centering\arraybackslash}X}
\toprule
& & \multicolumn{2}{c}{Mapping approximation} & &\multicolumn{2}{c}{PDE approximation} \tnote{*}\\
\cmidrule{3-4} \cmidrule{6-7}
Term (year) & Market (bps) & Model (bps) & Rel. Error & & Model (bps)& Rel. Error \\
\midrule
0.7 & 16.669 & 13.423 & 19.4733\% & & 16.361 & 1.8477\% \\
1.2 & 19.742 & 16.614 & 15.8444\% & & 19.091 & 3.2975\% \\
1.7 & 21.768 & 19.706 & 9.4726\% & & 21.796 & 0.1286\% \\
2.2 & 23.782 & 22.702 & 4.5412\% & & 24.478 & 2.9266\% \\
2.7 & 26.369 & 25.607 & 2.8898\% & & 27.136 & 2.9087\% \\
3.2 & 28.943 & 28.424 & 1.7932\% & & 29.772 & 2.8643\% \\
3.8 & 31.720 & 31.694 & 0.0820\% & & 32.905 & 3.7358\% \\
4.3 & 34.497 & 34.325 & 0.4986\% & & 35.488 & 2.8727\% \\
4.8 & 37.522 & 36.876 & 1.7217\% & & 38.046 & 1.3965\% \\
5.3 & 40.530 & 39.347 & 2.9188\% & & 40.579 & 0.1209\% \\
5.8 & 43.331 & 41.743 & 3.6648\% & & 43.085 & 0.5677\% \\
6.3 & 46.116 & 44.064 & 4.4496\% & & 45.566 & 1.1926\% \\
6.8 & 48.916 & 46.314 & 5.3193\% & & 48.018 & 1.8358\% \\
7.3 & 51.701 & 48.493 & 6.2049\% & & 50.444 & 2.4313\% \\
7.8 & 53.763 & 50.604 & 5.8758\% & & 52.841 & 1.7149\% \\
8.3 & 55.825 & 52.649 & 5.6892\% & & 55.210 & 1.1017\% \\
8.8 & 57.887 & 54.629 & 5.6282\% & & 57.549 & 0.5839\% \\
9.3 & 59.938 & 56.547 & 5.6575\% & & 59.859 & 0.1318\% \\
9.8 & 62.001 & 58.405 & 5.7999\% & & 62.139 & 0.2226\% \\
10.3 & 64.051 & 60.204 & 6.0062\% & & 64.388 & 0.5261\% \\
\bottomrule
\end{tabularx}
\begin{tablenotes}
\footnotesize
\item[*] $\alpha_2 = 0.00176$, $\beta_2 = 1.04968$, $\sigma_2 = 0.00274$, $\lambda_0 = 0.00207$.
\end{tablenotes}
\end{threeparttable}

\bigskip

\begin{threeparttable}
\caption{Calibration results for HSBC Bank PLC CDS spreads (uncorrelated case)}
\label{tab: hsbc2}
\begin{tabularx}{.8\textwidth}
{c >{\centering\arraybackslash}X >{\centering\arraybackslash}X >{\centering\arraybackslash}X p{.05cm} >{\centering\arraybackslash}X >{\centering\arraybackslash}X}
\toprule
& & \multicolumn{2}{c}{Mapping approximation} & &\multicolumn{2}{c}{PDE approximation} \tnote{*}\\
\cmidrule{3-4} \cmidrule{6-7}
Term (year) & Market (bps) & Model (bps) & Rel. Error & & Model (bps)& Rel. Error \\
\midrule
0.7 & 14.610 & 10.474 & 28.3094\% & & 14.117 & 3.3744\% \\
1.2 & 16.770 & 13.859 & 17.3584\% & & 16.690 & 0.4770\% \\
1.7 & 19.293 & 17.102 & 11.3565\% & & 19.242 & 0.2643\% \\
2.2 & 21.802 & 20.210 & 7.3021\% & & 21.776 & 0.1193\% \\
2.7 & 23.863 & 23.189 & 2.8245\% & & 24.289 & 1.7852\% \\
3.2 & 25.912 & 26.045 & 0.5133\% & & 26.783 & 3.3614\% \\
3.8 & 29.049 & 29.318 & 0.9260\% & & 29.749 & 2.4097\% \\
4.3 & 32.185 & 31.921 & 0.8203\% & & 32.195 & 0.0311\% \\
4.8 & 34.607 & 34.417 & 0.5490\% & & 34.620 & 0.0376\% \\
5.3 & 37.015 & 36.811 & 0.5511\% & & 37.022 & 0.0189\% \\
5.8 & 39.809 & 39.107 & 1.7634\% & & 39.401 & 1.0249\% \\
6.3 & 42.588 & 41.310 & 3.0008\% & & 41.756 & 1.9536\% \\
6.8 & 45.382 & 43.423 & 4.3167\% & & 44.086 & 2.8558\% \\
7.3 & 48.161 & 45.451 & 5.6270\% & & 46.393 & 3.6710\% \\
7.8 & 49.824 & 47.397 & 4.8711\% & & 48.674 & 2.3081\% \\
8.3 & 51.488 & 49.265 & 4.3175\% & & 50.930 & 1.0837\% \\
8.8 & 53.152 & 51.058 & 3.9396\% & & 53.160 & 0.0151\% \\
9.3 & 54.806 & 52.780 & 3.6967\% & & 55.363 & 1.0163\% \\
9.8 & 56.470 & 54.434 & 3.6055\% & & 57.541 & 1.8966\% \\
10.3 & 58.124 & 56.022 & 3.6164\% & & 59.691 & 2.6960\% \\
\bottomrule
\end{tabularx}
\begin{tablenotes}
\footnotesize
\item[*] $\alpha_2 = 0.00682$, $\beta_2 = 0.25644$, $\sigma_2 = 0.02269$, $\lambda_0 = 0.00174$.
\end{tablenotes}
\end{threeparttable}
\end{table}

\begin{table}[H]
\centering
\footnotesize
\captionsetup{font=footnotesize, skip = 5pt}
\begin{minipage}{.48\textwidth}
\centering
\begin{threeparttable}
\caption{Risk-neutral survival probabilities for JP Morgan Chase \& Co (correlated case)}
\label{tab: check-jp1}
\begin{tabularx}{.9\textwidth}
{c>{\centering\arraybackslash}X>{\centering\arraybackslash}X>{\centering\arraybackslash}X}
\toprule
% & & \multicolumn{2}{c}{PDE}\\
% & & \multicolumn{2}{c}{approximation}\\
% \cmidrule{3-4}
Term & Market& Model \tnote{*} & Rel. Error \\
(year) & & &  \\
\midrule
0.7  & 0.99805  & 0.99810  & 0.0049\% \\ 
1.2  & 0.99605  & 0.99620  & 0.0146\% \\ 
1.7  & 0.99380  & 0.99384  & 0.0038\% \\ 
2.2  & 0.99125  & 0.99103  & 0.0220\% \\ 
2.7  & 0.98810  & 0.98778  & 0.0323\% \\ 
3.2  & 0.98455  & 0.98409  & 0.0467\% \\ 
3.8  & 0.98045  & 0.97909  & 0.1391\% \\ 
4.3  & 0.97595  & 0.97445  & 0.1541\% \\ 
4.8  & 0.97080  & 0.96938  & 0.1458\% \\ 
5.3  & 0.96520  & 0.96391  & 0.1338\% \\ 
5.8  & 0.95930  & 0.95803  & 0.1327\% \\ 
6.3  & 0.95305  & 0.95175  & 0.1368\% \\ 
6.8  & 0.94640  & 0.94508  & 0.1399\% \\ 
7.3  & 0.93935  & 0.93802  & 0.1412\% \\ 
7.8  & 0.93275  & 0.93060  & 0.2303\% \\ 
8.3  & 0.92590  & 0.92282  & 0.3329\% \\ 
8.8  & 0.91880  & 0.91468  & 0.4481\% \\ 
9.3  & 0.91145  & 0.90621  & 0.5751\% \\ 
9.8  & 0.90385  & 0.89740  & 0.7133\% \\ 
10.3 & 0.89600  & 0.88828  & 0.8615\% \\ 
\bottomrule
\end{tabularx}
% \begin{tablenotes}
% \item[*] We apply the \textit{PDE approximation} method here. 
% \end{tablenotes}
\end{threeparttable}
\end{minipage}%
\hfill
\begin{minipage}{.48\textwidth}
\centering
\begin{threeparttable}
\caption{Risk-neutral survival probabilities for HSBC Bank PLC (correlated case)}
\label{tab: check-hsbc1}
\begin{tabularx}{.9\textwidth}
{c>{\centering\arraybackslash}X>{\centering\arraybackslash}X>{\centering\arraybackslash}X}
\toprule
% & & \multicolumn{2}{c}{PDE}\\
% & & \multicolumn{2}{c}{approximation}\\
% \cmidrule{3-4}
Term & Market& Model \tnote{*} & Rel. Error\\
(year) & & & \\
\midrule
0.7  & 0.99830  & 0.99835  & 0.0048\% \\ 
1.2  & 0.99645  & 0.99666  & 0.0209\% \\ 
1.7  & 0.99445  & 0.99455  & 0.0096\% \\ 
2.2  & 0.99200  & 0.99202  & 0.0018\% \\ 
2.7  & 0.98920  & 0.98908  & 0.0126\% \\ 
3.2  & 0.98615  & 0.98572  & 0.0432\% \\ 
3.8  & 0.98205  & 0.98117  & 0.0897\% \\ 
4.3  & 0.97745  & 0.97694  & 0.0525\% \\ 
4.8  & 0.97295  & 0.97231  & 0.0655\% \\ 
5.3  & 0.96810  & 0.96731  & 0.0820\% \\ 
5.8  & 0.96250  & 0.96192  & 0.0602\% \\ 
6.3  & 0.95650  & 0.95617  & 0.0349\% \\ 
6.8  & 0.95005  & 0.95005  & 0.0000\% \\ 
7.3  & 0.94325  & 0.94358  & 0.0349\% \\ 
7.8  & 0.93740  & 0.93676  & 0.0678\% \\ 
8.3  & 0.93130  & 0.92961  & 0.1811\% \\ 
8.8  & 0.92500  & 0.92214  & 0.3097\% \\ 
9.3  & 0.91860  & 0.91434  & 0.4639\% \\ 
9.8  & 0.91180  & 0.90624  & 0.6101\% \\ 
10.3 & 0.90495  & 0.89784  & 0.7860\% \\ 
\bottomrule
\end{tabularx}
% \begin{tablenotes}
% \item[*] We apply the \textit{PDE approximation} method here. 
% \end{tablenotes}
\end{threeparttable}
\end{minipage}

\bigskip

\begin{minipage}{.48\textwidth}
\centering
\begin{threeparttable}
\caption{Risk-neutral survival probabilities for JP Morgan Chase \& Co (uncorrelated case)}
\label{tab: check-jp2}
\begin{tabularx}{.9\textwidth}
{c>{\centering\arraybackslash}X>{\centering\arraybackslash}X>{\centering\arraybackslash}X}
\toprule
% & & \multicolumn{2}{c}{PDE}\\
% & & \multicolumn{2}{c}{approximation}\\
% \cmidrule{3-4}
Term & Market& Model \tnote{*} & Rel. Error\\
(year) & & & \\
\midrule
0.7  & 0.99805  & 0.99810  & 0.0050\% \\ 
1.2  & 0.99605  & 0.99620  & 0.0147\% \\ 
1.7  & 0.99380  & 0.99384  & 0.0038\% \\ 
2.2  & 0.99125  & 0.99103  & 0.0221\% \\ 
2.7  & 0.98810  & 0.98778  & 0.0325\% \\ 
3.2  & 0.98455  & 0.98409  & 0.0470\% \\ 
3.8  & 0.98045  & 0.97908  & 0.1394\% \\ 
4.3  & 0.97595  & 0.97444  & 0.1545\% \\ 
4.8  & 0.97080  & 0.96938  & 0.1463\% \\ 
5.3  & 0.96520  & 0.96390  & 0.1342\% \\ 
5.8  & 0.95930  & 0.95802  & 0.1330\% \\ 
6.3  & 0.95305  & 0.95174  & 0.1369\% \\ 
6.8  & 0.94640  & 0.94508  & 0.1398\% \\ 
7.3  & 0.93935  & 0.93803  & 0.1405\% \\ 
7.8  & 0.93275  & 0.93061  & 0.2292\% \\ 
8.3  & 0.92590  & 0.92283  & 0.3311\% \\ 
8.8  & 0.91880  & 0.91471  & 0.4455\% \\ 
9.3  & 0.91145  & 0.90624  & 0.5715\% \\ 
9.8  & 0.90385  & 0.89745  & 0.7083\% \\ 
10.3 & 0.89600  & 0.88834  & 0.8551\% \\ 
\bottomrule
\end{tabularx}
% \begin{tablenotes}
% \item[*] We apply the \textit{PDE approximation} method here. 
% \end{tablenotes}
\end{threeparttable}
\end{minipage}%
\hfill
\begin{minipage}{.48\textwidth}
\centering
\begin{threeparttable}
\caption{Risk-neutral survival probabilities for HSBC Bank PLC (uncorrelated case)}
\label{tab: check-hsbc2}
\begin{tabularx}{.9\textwidth}
{c>{\centering\arraybackslash}X>{\centering\arraybackslash}X>{\centering\arraybackslash}X}
\toprule
% & & \multicolumn{2}{c}{PDE}\\
% & & \multicolumn{2}{c}{approximation}\\
% \cmidrule{3-4}
Term & Market& Model \tnote{*} & Rel. Error \\
(year) & & & \\
\midrule
0.7  & 0.99830  & 0.99836  & 0.0058\% \\ 
1.2  & 0.99645  & 0.99667  & 0.0222\% \\ 
1.7  & 0.99445  & 0.99456  & 0.0111\% \\ 
2.2  & 0.99200  & 0.99203  & 0.0030\% \\ 
2.7  & 0.98920  & 0.98909  & 0.0114\% \\ 
3.2  & 0.98615  & 0.98574  & 0.0418\% \\ 
3.8  & 0.98205  & 0.98119  & 0.0874\% \\ 
4.3  & 0.97745  & 0.97698  & 0.0486\% \\ 
4.8  & 0.97295  & 0.97238  & 0.0590\% \\ 
5.3  & 0.96810  & 0.96740  & 0.0718\% \\ 
5.8  & 0.96250  & 0.96207  & 0.0446\% \\ 
6.3  & 0.95650  & 0.95638  & 0.0120\% \\ 
6.8  & 0.95005  & 0.95036  & 0.0322\% \\ 
7.3  & 0.94325  & 0.94400  & 0.0792\% \\ 
7.8  & 0.93740  & 0.93732  & 0.0088\% \\ 
8.3  & 0.93130  & 0.93033  & 0.1042\% \\ 
8.8  & 0.92500  & 0.92305  & 0.2111\% \\ 
9.3  & 0.91860  & 0.91548  & 0.3398\% \\ 
9.8  & 0.91180  & 0.90764  & 0.4562\% \\ 
10.3 & 0.90495  & 0.89954  & 0.5976\% \\ 
\bottomrule
\end{tabularx}
\end{threeparttable}
\end{minipage}

\begin{threeparttable}
\begin{tablenotes}
\item[*] We apply the \textit{PDE approximation} method here.
\end{tablenotes}
\end{threeparttable}

\end{table}

\begin{table}[htbp]
\centering
\begin{threeparttable}
\captionsetup{font=footnotesize, skip = 5pt}
\caption{Computational time of the calibrations}
\label{tab: time2}
\footnotesize
\begin{tabular}{lcc}
\toprule
Reference entity & Correlated case & Uncorrelated case\\
& (second) & (second)\\
\midrule
JP Morgan Chase \& Co & 214.57 & 180.78\\
HSBC Bank PLC & 224.31 & 178.16 \\
\bottomrule
\end{tabular}
\end{threeparttable}
\end{table}

\subsection{Negative interest rates}
In this section, we aim to compare our CDS spread approximation in the SSRD model with the approximation of \citet{di_francesco_cds_2019}, which employs the same asymptotic approximation method of \citet{lorig_analytical_2015}, but in the extended jump-to-default constant elasticity of variance (JDCEV) model (\citet{carr_jump_2006}). In their model setup, they assumed that the interest rate follows a Vasicek process allowing for negative values. We use the market data as reported in \citet{di_francesco_cds_2019} for performing our comparative calibration study. From Table \ref{tab: zcb1}, it is apparent that the CIR model for the interest rate as assumed in the SSRD model offers a much better fit than the Vasicek model used in the extended JDCEV model. The volatility parameter $\widehat{\sigma}_1$ (\textit{Step 2} of Section \ref{sec: calibration approach}) is computed to be  $0.11836$ for longest 6-year maturity of CDS spreads.

In this particular calibration study, the following choice of weights in the final optimisation step (\textit{Step 3} in Section \ref{sec: calibration approach}) provides the best results. 
\eqstar{
\omega_i = \frac{\frac{1}{T_i}}{\sum_{i=1}^N {1\over T_i}}.
}
The idea here is to assign more weight to the CDS spread values corresponding to short-term maturities as they are susceptible to give large calibration errors. The calibration results from our approximation method are presented in Table \ref{tab: bnp1} - Table \ref{tab: ubs2}, along with the estimates reported in \citet{di_francesco_cds_2019}. It is evident that the estimates using our approximation formula \eqref{eq: cds spread0} in the transformed SSRD model \eqref{eq: ssrd t} closely align with the market data, for both correlated and uncorrelated cases. The relative errors are notably small, except for the very short maturity values, such as 1-year maturity, and non-liquid terms, such as the CDS with 4-year maturity, which are attributed to market incompleteness, as argued in \citet{di_francesco_cds_2019}. In addition to the reference entities BNP Paribas and UBS AG, we also present the calibration results for four other distinct entities in Table \ref{tab: caixa1} - Table \ref{tab: medio1}. Our approximation formula also yields excellent results for those entities as well.

\begin{table}[H]
\centering
\footnotesize
\captionsetup{font = footnotesize, skip = 5pt}
\begin{threeparttable}
\caption{Calibration results for BNP Paribas CDS spreads (correlated case)}
\label{tab: bnp1}
\begin{tabularx}{.65\textwidth}
{c>{\centering\arraybackslash}X>{\centering\arraybackslash}X>{\centering\arraybackslash}X>{\centering\arraybackslash}X>{\centering\arraybackslash}X}
\toprule
Term & Market & JDCEV & Rel. Error & SSRD & Rel. Error\\
(year) & (bps) & (bps) &  & (bps) &  \\
\midrule
1.0 & 34.615 & 34.007 & 1.7556\% & 33.491 & 3.2471\% \\
1.5 & 39.876 & 39.668 & 0.5211\% & 39.754 & 0.3054\% \\
2.0 & 45.112 & 45.481 & 0.8197\% & 46.002 & 1.9740\% \\
2.5 & 49.994 & 51.444 & 2.9006\% & 52.228 & 4.4696\% \\
3.0 & 56.110 & 57.547 & 2.5601\% & 58.428 & 4.1312\% \\
3.5 & 64.573 & 63.783 & 1.2233\% & 64.594 & 0.0331\% \\
4.0 & 72.590 & 70.145 & 3.3684\% & 70.722 & 2.5734\% \\
4.5 & 77.590 & 76.626 & 1.2426\% & 76.806 & 1.0104\% \\
5.0 & 82.270 & 83.218 & 1.1528\% & 82.843 & 0.6965\% \\
5.5 & 89.146 & 89.916 & 0.8639\% & 88.826 & 0.3584\% \\
6.0 & 96.705 & 96.711 & 0.0057\% & 94.753 & 2.0185\% \\
\bottomrule
\end{tabularx}
\begin{tablenotes}
\item $\alpha_2 = 0.00150$, $\beta_2 = 2.79211$, $\sigma_2 = 0.00033$, $\lambda_0 = 0.00350$, $\rho = 0.06969$.
\end{tablenotes}
\end{threeparttable}
\end{table}
\begin{table}[H]
\centering
\footnotesize
\captionsetup{font=footnotesize, skip = 5pt}
\begin{threeparttable}
\caption{Calibration results for UBS AG CDS spreads (correlated case)}
\label{tab: ubs1}
\begin{tabularx}{.65\textwidth}
{c>{\centering\arraybackslash}X>{\centering\arraybackslash}X>{\centering\arraybackslash}X>{\centering\arraybackslash}X>{\centering\arraybackslash}X}
\toprule
Term & Market & JDCEV & Rel. Error & SSRD & Rel. Error\\
(year) & (bps) & (bps) &  & (bps) &  \\
\midrule
1.0 & 25.720 & 25.962 & 0.9417\% & 25.672 & 0.1866\% \\
1.5 & 30.263 & 30.361 & 0.3225\% & 30.296 & 0.1090\% \\
2.0 & 35.105 & 34.787 & 0.9059\% & 34.900 & 0.5840\% \\
2.5 & 39.692 & 39.242 & 1.1332\% & 39.482 & 0.5291\% \\
3.0 & 43.970 & 43.726 & 0.5545\% & 44.037 & 0.1524\% \\
3.5 & 48.062 & 48.239 & 0.3674\% & 48.562 & 1.0403\% \\
4.0 & 52.300 & 52.779 & 0.9149\% & 53.054 & 1.4417\% \\
4.5 & 56.965 & 57.345 & 0.6662\% & 57.510 & 0.9567\% \\
5.0 & 61.910 & 61.935 & 0.0396\% & 61.928 & 0.0291\% \\
5.5 & 66.841 & 66.546 & 0.4412\% & 66.305 & 0.8019\% \\
6.0 & 71.285 & 71.176 & 0.1526\% & 70.639 & 0.9062\% \\
\bottomrule
\end{tabularx}
\begin{tablenotes}
\item $\alpha_2 = 0.00803$, $\beta_2 = 0.38923$, $\sigma_2 = 0.00176$, $\lambda_0 = 0.00274$, $\rho = 0.96$.
\end{tablenotes}
\end{threeparttable}
\end{table}

\begin{table}[H]
\centering
\footnotesize
\captionsetup{font=footnotesize, skip=5pt}
\begin{threeparttable}
\caption{Calibration results for BNP Paribas CDS spreads (uncorrelated case)}
\label{tab: bnp2}
\begin{tabularx}{.65\textwidth}
{c>{\centering\arraybackslash}X>{\centering\arraybackslash}X>{\centering\arraybackslash}X>{\centering\arraybackslash}X>{\centering\arraybackslash}X}
\toprule
Term & Market & JDCEV & Rel. Error & SSRD & Rel. Error\\
(year) & (bps) & (bps) &  & (bps) &  \\
\midrule
1.0 & 34.615 & 34.054 & 1.6221\% & 33.474 & 3.2963\% \\
1.5 & 39.876 & 39.674 & 0.5071\% & 39.739 & 0.3431\% \\
2.0 & 45.112 & 45.464 & 0.7803\% & 45.988 & 1.9430\% \\
2.5 & 49.994 & 51.415 & 2.8424\% & 52.216 & 4.4456\% \\
3.0 & 56.110 & 57.517 & 2.5076\% & 58.417 & 4.1116\% \\
3.5 & 64.573 & 63.761 & 1.2566\% & 64.584 & 0.0177\% \\
4.0 & 72.590 & 70.136 & 3.3801\% & 70.714 & 2.5844\% \\
4.5 & 77.590 & 76.632 & 1.2351\% & 76.799 & 1.0195\% \\
5.0 & 82.270 & 83.236 & 1.1738\% & 82.837 & 0.6892\% \\
5.5 & 89.146 & 89.937 & 0.8874\% & 88.822 & 0.3629\% \\
6.0 & 96.705 & 96.722 & 0.0178\% & 94.750 & 2.0216\% \\
\bottomrule
\end{tabularx}
\begin{tablenotes}
\item $\alpha_2 = 0.00361$, $\beta_2 = 1.16567$, $\sigma_2 = 0.00097$, $\lambda_0 = 0.0035$.
\end{tablenotes}
\end{threeparttable}
\end{table}
\begin{table}[H]
\centering
\footnotesize
\captionsetup{font=footnotesize, skip = 5pt}
\begin{threeparttable}
\caption{Calibration results for UBS AG CDS spreads (uncorrelated case)}
\label{tab: ubs2}
\begin{tabularx}{.65\textwidth}
{c>{\centering\arraybackslash}X>{\centering\arraybackslash}X>{\centering\arraybackslash}X>{\centering\arraybackslash}X>{\centering\arraybackslash}X}
\toprule
Term & Market & JDCEV & Rel. Error & SSRD & Rel. Error\\
(year) & (bps) & (bps) &  & (bps) &  \\
\midrule
1.0 & 25.720 & 25.914 & 0.7543\% & 25.700 & 0.0778\% \\
1.5 & 30.263 & 30.331 & 0.2254\% & 30.321 & 0.1917\% \\
2.0 & 35.105 & 34.781 & 0.9218\% & 34.920 & 0.5270\% \\
2.5 & 39.692 & 39.261 & 1.0869\% & 39.493 & 0.5014\% \\
3.0 & 43.970 & 43.765 & 0.4674\% & 44.037 & 0.1524\% \\
3.5 & 48.062 & 48.289 & 0.4723\% & 48.547 & 1.0091\% \\
4.0 & 52.300 & 52.830 & 1.0132\% & 53.022 & 1.3805\% \\
4.5 & 56.965 & 57.383 & 0.7343\% & 57.458 & 0.8654\% \\
5.0 & 61.910 & 61.945 & 0.0569\% & 61.852 & 0.0937\% \\
5.5 & 66.841 & 66.512 & 0.4930\% & 66.203 & 0.9545\% \\
6.0 & 71.285 & 71.078 & 0.2898\% & 70.507 & 1.0914\% \\
\bottomrule
\end{tabularx}
\begin{tablenotes}
\item $\alpha_2 = 0.01021$, $\beta_2 = 0.30701$, $\sigma_2 = 0.00601$, $\lambda_0 = 0.00274$.
\end{tablenotes}
\end{threeparttable}
\end{table}

\begin{table}[htbp]
\centering
\footnotesize
\captionsetup{font=footnotesize, skip = 5pt}
\begin{threeparttable}
\caption{Computational time of the calibrations}
\label{tab: time1}
\begin{tabular}{lcc}
\toprule
Reference entity & Correlated case & Uncorrelated case\\
& (second) & (second)\\
\midrule
BNP Paribas & 210.02 & 167.96\\
UBS AG & 243.90 & 190.16\\
\bottomrule
\end{tabular}
\end{threeparttable}
\end{table}

Moreover, we compute the estimate of the risk-neutral survival probability using \eqref{eq: survival probability expansion} and compare it with the market data. The results are reported in Table \ref{tab: check-bnp1} - Table \ref{tab: check-ubs2}. Once again, it is evident that our approximation formula closely aligns with the survival probability inferred from the market CDS spreads.

\begin{table}[H]
\centering
\footnotesize
\captionsetup{font=footnotesize, skip = 5pt}
\begin{threeparttable}
\caption{Risk-neutral survival probabilities for BNP Paribas (correlated case)}
\label{tab: check-bnp1}
\begin{tabularx}{.7\textwidth}
{c>{\centering\arraybackslash}X>{\centering\arraybackslash}X>{\centering\arraybackslash}X>{\centering\arraybackslash}X>{\centering\arraybackslash}X}
\toprule
Term & Market & JDCEV & Rel. Error & SSRD & Rel. Error\\
(year) &  &  &  &  &  \\
\midrule
1  & 0.99425  & 0.99435  & 0.0094\%  & 0.99443  & 0.0174\% \\ 
2  & 0.98508  & 0.98494  & 0.0139\%  & 0.98476  & 0.0323\% \\ 
3  & 0.97230  & 0.97154  & 0.0780\%  & 0.97113  & 0.1210\% \\ 
4  & 0.95254  & 0.95398  & 0.1510\%  & 0.95370  & 0.1219\% \\ 
5  & 0.93328  & 0.93212  & 0.1254\%  & 0.93270  & 0.0628\% \\ 
6  & 0.90887  & 0.90590  & 0.3272\%  & 0.90838  & 0.0545\% \\ 
\bottomrule
\end{tabularx}
\end{threeparttable}
\end{table}
\begin{table}[H]
\centering
\footnotesize
\captionsetup{font=footnotesize, skip = 5pt}
\begin{threeparttable}
\caption{Risk-neutral survival probabilities for UBS AG (correlated case)}
\label{tab: check-ubs1}
\begin{tabularx}{.7\textwidth}
{c>{\centering\arraybackslash}X>{\centering\arraybackslash}X>{\centering\arraybackslash}X>{\centering\arraybackslash}X>{\centering\arraybackslash}X}
\toprule
Term & Market & JDCEV & Rel. Error & SSRD & Rel. Error\\
(year) &  &  &  &  &  \\
\midrule
1  & 0.99572  & 0.99568  & 0.0042\%  & 0.99572  & 0.0000\% \\ 
2  & 0.98837  & 0.98847  & 0.0105\%  & 0.98842  & 0.0053\% \\ 
3  & 0.97823  & 0.97835  & 0.0123\%  & 0.97817  & 0.0060\% \\ 
4  & 0.96564  & 0.96532  & 0.0331\%  & 0.96511  & 0.0557\% \\ 
5  & 0.94944  & 0.94940  & 0.0040\%  & 0.94936  & 0.0086\% \\ 
6  & 0.93056  & 0.93062  & 0.0063\%  & 0.93108  & 0.0560\% \\ 
\bottomrule
\end{tabularx}
\end{threeparttable}
\end{table}

\begin{table}[H]
\centering
\footnotesize
\captionsetup{font=footnotesize, skip = 5pt}
\begin{threeparttable}
\caption{Risk-neutral survival probabilities for BNP Paribas (uncorrelated case)}
\label{tab: check-bnp2}
\begin{tabularx}{.7\textwidth}
{c>{\centering\arraybackslash}X>{\centering\arraybackslash}X>{\centering\arraybackslash}X>{\centering\arraybackslash}X>{\centering\arraybackslash}X}
\toprule
Term & Market & JDCEV & Rel. Error & SSRD & Rel. Error\\
(year) &\\
\midrule
1  & 0.99425  & 0.99434  & 0.0086\%  & 0.99443  & 0.0176\% \\ 
2  & 0.98508  & 0.98495  & 0.0131\%  & 0.98476  & 0.0319\% \\ 
3  & 0.97230  & 0.97157  & 0.0753\%  & 0.97113  & 0.1204\% \\ 
4  & 0.95254  & 0.95402  & 0.1551\%  & 0.95370  & 0.1224\% \\ 
5  & 0.93328  & 0.93218  & 0.1189\%  & 0.93270  & 0.0624\% \\ 
6  & 0.90887  & 0.90602  & 0.3135\%  & 0.90838  & 0.0541\% \\ 
\bottomrule
\end{tabularx}
\end{threeparttable}
\end{table}
\begin{table}[H]
\centering
\footnotesize
\captionsetup{font=footnotesize, skip = 5pt}
\begin{threeparttable}
\caption{Risk-neutral survival probabilities for UBS AG (uncorrelated case)}
\label{tab: check-ubs2}
\begin{tabularx}{.7\textwidth}
{c>{\centering\arraybackslash}X>{\centering\arraybackslash}X>{\centering\arraybackslash}X>{\centering\arraybackslash}X>{\centering\arraybackslash}X}
\toprule
Term & Market & JDCEV & Rel. Error & SSRD & Rel. Error\\
(year) &\\
\midrule
1  & 0.99572  & 0.99569  & 0.0034\%  & 0.99572  & 0.0004\% \\ 
2  & 0.98837  & 0.98847  & 0.0107\%  & 0.98841  & 0.0046\% \\ 
3  & 0.97823  & 0.97834  & 0.0104\%  & 0.97818  & 0.0056\% \\ 
4  & 0.96564  & 0.96529  & 0.0364\%  & 0.96514  & 0.0524\% \\ 
5  & 0.94944  & 0.94939  & 0.0048\%  & 0.94944  & 0.0003\% \\ 
6  & 0.93056  & 0.93071  & 0.0157\%  & 0.93125  & 0.0741\% \\ 
\bottomrule
\end{tabularx}
\end{threeparttable}
\end{table}

\section{Conclusion}

In this work, we provide a closed-form approximation formula for the CDS spread under the SSRD model framework. Specifically, we utilise Taylor's theorem to derive approximations for the two crucial components within the general CDS spread formula, which are solutions to Cauchy problems. Most notably, we derive our approximation without assuming uncorrelated interest rate and default intensity, as required by \citet{brigo_credit_2005} for their calibration procedure. With several numerical studies using different market data on CDS spread, we demonstrate the efficiency of our calibration procedure and accuracy of our CDS spread approximation, while comparing it with other existing approximations. 

\printbibliography

\begin{appendix}
\section*{Appendix}

\renewcommand{\thesubsection}{A.\arabic{subsection}}

\setcounter{equation}{0}
\renewcommand{\theequation}{\thesubsection.\arabic{equation}}

\newtheorem{theoremA}{Theorem}[subsection]
\newtheorem{definitionA}[theoremA]{Definition}
\newtheorem{lemmaA}[theoremA]{Lemma}
\newtheorem{corollaryA}[theoremA]{Corollary}
\newtheorem{remarkA}[theoremA]{Remark}
\newtheorem{propositionA}[theoremA]{Proposition}

\renewcommand{\thetheoremA}{\thesubsection.\arabic{theoremA}}
\renewcommand{\thedefinitionA}{\thesubsection.\arabic{theoremA}}
\renewcommand{\thelemmaA}{\thesubsection.\arabic{theoremA}}
\renewcommand{\thecorollaryA}{\thesubsection.\arabic{theoremA}}
\renewcommand{\thepropositionA}{\thesubsection.\arabic{theoremA}}

\numberwithin{table}{subsection}

\subsection{The approximation formula}
\label{sec: 2nd expression}

Within the framework established in Section \ref{sec: model setting}, \ref{sec: asymptotic approximation technique} and \ref{sec: approximation formula}, we derive the the $N$-th order approximation formula for CDS spread at time $0$, as given in \eqref{eq: cds approximation}. However, we restrict the display of the explicit expression for the approximation formula to the second order, as various numerical experiments indicate that the second-order approximation is sufficiently accurate.

From Proposition \ref{pro: cds n}, we obtain the second-order approximation formula of CDS spread at $t=0$, given by
\eqstar{
R_2 = \frac{(1-\zeta) \int_0^T \ee^{-\alpha_2 s} \sum_{n=0}^2 h_n(0, x, y, s) \dd s}{ \int_0^T \ee^{-\alpha_2 s} \sum_{n=0}^2 h_n(0, x, y, s) (s - t_{N(s)-1}) \dd s 
+ \sum_{i=1}^M (t_i - t_{i-1}) \sum_{n=0}^2 v_n(0, x, y, t_i)},
}
with the zeroth term $v_0$ and $h_0$ specified in \eqref{eq: v0} and \eqref{eq: h0}, respectively. To simplify the derivation process, we introduce the following useful functions:
\eqstar{
\Psi_0 (\alpha, i, j, t_1, t_2) &:= \int_{t_1}^{t_2} \ee^{\alpha s} \bar{x}(t, s)^{i\over 2} \bar{y}(t, s)^{j\over 2} \dd s,\\
\Psi_1 (\alpha, i, j, k, t_1, t_2) &:= \int_{t_1}^{t_2} \ee^{\alpha s} \bar{x}(t, s)^{i\over 2} \bar{y}(t, s)^{j\over 2} \mathbf{C}_{1,1} (t, s)^k \dd s,\\
\Psi_2 (\alpha, i, j, k, t_1, t_2) &:= \int_{t_1}^{t_2} \ee^{\alpha s} \bar{x}(t, s)^{i\over 2} \bar{y}(t, s)^{j\over 2} \mathbf{C}_{2,2} (t, s)^k \dd s,\\
\Psi_3 (\alpha, i, j, k, t_1, t_2) &:= \int_{t_1}^{t_2} \ee^{\alpha s} \bar{x}(t, s)^{i\over 2} \bar{y}(t, s)^{j\over 2} \mathbf{C}_{1,2} (t, s)^k \dd s, \\
&= \int_{t_1}^{t_2} \ee^{\alpha s} \bar{x}(t, s)^{i\over 2} \bar{y}(t, s)^{j\over 2} \mathbf{C}_{2,1} (t, s)^k \dd s,\\
\Phi_1 (\alpha, i, j, t_1, t_2) &:= \int_{t_1}^{t_2} \ee^{\alpha s} \bar{x}(t, s)^{i\over 2} \bar{y}(t, s)^{j\over 2} \mathbf{C}_{1,1} (t, s) \mathbf{C}_{1,2} (t, s) \dd s,\\
\Phi_2 (\alpha, i, j, t_1, t_2) &:= \int_{t_1}^{t_2} \ee^{\alpha s} \bar{x}(t, s)^{i\over 2} \bar{y}(t, s)^{j\over 2} \mathbf{C}_{2,2} (t, s) \mathbf{C}_{1,2} (t, s) \dd s ,\\
\Phi_3 (\alpha, i, j, t_1, t_2) &:= \int_{t_1}^{t_2} \ee^{\alpha s} \bar{x}(t, s)^{i\over 2} \bar{y}(t, s)^{j\over 2} \mathbf{C}_{1,1} (t, s) \mathbf{C}_{2,2} (t, s) \dd s,
}
for any $\alpha \in \Rb$, $i, j, k \in \Nb_0$, and $t\le t_1\le t_2 \le T$, where $\bar{x}$ and $\bar{y}$ are as defined in \eqref{eq: term bar}. Moreover, for all coefficient functions in the operator $\Ac(t)$ in \eqref{eq: term AA}, we set, taking $a(t, x, y)$ as an example,
\eqstar{
a_{i,j} (t) := \partial_x^i \partial_y^j a(t, \bar{x}, \bar{y}), && i, j \in \Nb_0
}
Thus, the zeroth order term of the expansion for $\Ac(t)$ in \eqref{eq: term AA} can be rewritten as follows
\eqstar{
\Ac_0 (s) = a_{0,0}(s) \partial_{xx}+b_{0,0}(s) \partial_{yy}+c_{0,0}(s) \partial_{xy}+\kappa_{0,0}(s) \partial_{x}+k_{0,0}(s) \partial_{y}+\gamma_{0,0}(s),
}
where
\eqstar{
a_{0,0}(s) &= {1\over 2}\sigma_1^2\ee^{\alpha_1 s}\bar{x} (s), 
&&\kappa_{0,0}(s) = \alpha_1\beta_1\ee^{\alpha_1 s},
&& c_{0,0}(s) = \widehat{\rho} \bar{x}^{1\over 2} \bar{y}^{1\over 2} \ee^{\bar{\alpha} s},\\
b_{0,0}(s) &= {1\over 2}\sigma_2^2\ee^{\alpha_2 s}\bar{y} (s),
&& k_{0,0}(s) = \alpha_2\beta_2\ee^{\alpha_2 s},
&& \gamma_{0,0}(s) = -(\ee^{-\alpha_1 t}\bar{x} (s)+\ee^{-\alpha_2 t}\bar{y} (s)),
}
with
\eqstar{
\widehat{\rho}:= \rho\sigma_1\sigma_2, && \bar{\alpha}:=\frac{\alpha_1+\alpha_2}{2}.
}
Applying \eqref{eq: term Gamma} and \eqref{eq: term cm}, we get the covariance matrix $\mathbf{C}(t,T)$ and the mean vector $z+\mathbf{m}(t,T)$, expressed as follows
\eqstar{
\mathbf{C}(t,T) :=\begin{pmatrix}
\mathbf{C}_{1,1}(t,T) & \mathbf{C}_{1,2}(t,T) \\
\mathbf{C}_{2,1}(t,T)  & \mathbf{C}_{2,2}(t,T) 
\end{pmatrix}, &&
\mathbf{m}(t,T) := \binom{\alpha_1 \beta_1 \psi(\alpha_1,t,T)}{\alpha_2 \beta_2 \psi(\alpha_2,t,T)},
}
where the elements of $\mathbf{C}(t,T)$ are given by
\eqstar{
\mathbf{C}_{1,1}(t,T) &:= \sigma_1^2 \left(\bar{x}_{\mathrm{fixed}}\psi(\alpha_1,t,T) + \alpha_1\beta_1 \Theta(\alpha_1, \alpha_1, t, T)\right),\\
\mathbf{C}_{2,2}(t,T) &:= \sigma_2^2 \left(\bar{y}_{\mathrm{fixed}}\psi(\alpha_2,t,T) + \alpha_2\beta_2 \Theta(\alpha_2, \alpha_2, t, T)\right),\\
\mathbf{C}_{1,2}(t,T) &= \mathbf{C}_{2,1}(t,T) := \mathbf{C}_{2,1}(t,T) := \widehat{\rho} \Psi_0(\bar{\alpha}, 1, 1, t, T).
}
Theorem \ref{thm: un} yields the first term $v_1$, for $t \in [0,T)$, as follows
\eqlnostar{eq: v1}{
v_1(t,x,y,T) = \Lc_1 (t, T) v_0(t,x,y,T) = \int_t^T \dd s_1 \Gc_1(t,s_1) v_0(t,x,y,T),
}
where
\eqlnostar{eq: G1}{
\Gc_1(t,s) &= \Ac_1(s, \Mc(t, s)),
}
with
\eqlnostar{eq: M}{
\Mc (t,s) &= \begin{pmatrix} 
\Mc_1 (t,s)\\
\Mc_2 (t,s)
\end{pmatrix}
:= 
\begin{pmatrix}
x + \alpha_x \beta_x \psi(\alpha_x,t,s) + \mathbf{C}_{1,1}(t,s) \partial_x + \mathbf{C}_{1,2}(t,s) \partial_y \\
y + \alpha_y \beta_y \psi(\alpha_y,t,s) + \mathbf{C}_{2,2}(t,s) \partial_y + \mathbf{C}_{2,1}(t,s) \partial_x
\end{pmatrix}.
}
Inserting \eqref{eq: M} into \eqref{eq: G1}, we obtain
\eqlnostar{eq: G1e}{
\Gc_1 (t, s) &= (\Mc_1 (t, s) - \bar{x} (t, s)) \Ac_{1,0} (s) + (\Mc_2 (t, s) - \bar{y} (t, s)) \Ac_{0,1} (s),\nonumber \\
&= \xt (t, s) \Ac_{1,0} (s) + \yt (t, s) \Ac_{0,1} (s),
}
where $\xt$ and $\yt$ are defined as follows
\eqstar{
\xt (t, s) := \mathbf{C}_{1,1}(t,s) \partial_x + \mathbf{C}_{1,2}(t,s) \partial_y,
&&\yt (t, s) := \mathbf{C}_{2,2}(t,s) \partial_y + \mathbf{C}_{2,1}(t,s) \partial_x,
}
and $\Ac_{1,0} (s)$ and $\Ac_{0,1} (s)$ are components of the first order term of $\Ac(t)$ expansion. Specifically, $\Ac_1(s) = \Ac_{1,0} (s) + \Ac_{0,1} (s)$, where each term is defined as
\eqstar{
\Ac_{1,0} (s) &:= a_{1,0}(s) \partial_{xx} + c_{1,0}(s) \partial_{xy} +\gamma_{1,0}(s),\\
\Ac_{0,1} (s) &:= b_{0,1}(s) \partial_{yy} + c_{0,1}(s) \partial_{xy} +\gamma_{0,1}(s),
}
with 
\eqstar{
a_{1,0}(1) = {1\over 2}\sigma_1^2 \ee^{\alpha_1 s}, 
&& c_{1,0}(s) &= {1\over 2} \widehat{\rho} \bar{x}^{-{1\over 2}} (t, s) \bar{y}^{1\over 2} (t, s) \ee^{\bar{\alpha} s}, 
&& \gamma_{1,0}(s) = -\ee^{-\alpha_1 s},\\
b_{0,1}(s) = {1\over 2}\sigma_2^2 \ee^{\alpha_2 s}, 
&& c_{0,1}(s) &= {1\over 2} \widehat{\rho} \bar{x}^{1\over 2} (t, s) \bar{y}^{-{1\over 2}} (t, s) \ee^{\bar{\alpha} s}, 
&& \gamma_{0,1}(s) = -\ee^{-\alpha_2 s},
}
Therefore, the explicit expression of $\Lc_1 (t, T)$ is given by
\eqlnostar{eq: L1}{
\Lc_1 (t, T) = \sum_{i=1}^6 L_{1,0} ^{(i)} (t, T) + \sum_{i=1}^6 L_{0,1} ^{(i)} (t, T),
}
where each component is defined as follows
\eqstar{
L_{1,0} ^{(1)} (t, T) &:= {1\over 2}\sigma_1^2 \Psi_1 (\alpha_1, 0, 0, 1, t, T) \partial_x^3, 
&& L_{1,0} ^{(2)} (t, T) := {1\over 2}\sigma_1^2 \Psi_3 (\alpha_1, 0, 0, 1, t, T) \partial_x^2\partial_y,\\
L_{1,0} ^{(3)} (t, T) &:= {1\over 2} \widehat{\rho} \Psi_1 (\bar{\alpha}, -1, 1, 1, t, T) \partial_x^2\partial_y,
&& L_{1,0} ^{(4)} (t, T) := {1\over 2} \widehat{\rho} \Psi_3 (\bar{\alpha}, -1, 1, 1, t, T) \partial_x\partial_y^2,\\
L_{1,0} ^{(5)} (t, T) &:= -\Psi_1 (-\alpha_1, 0, 0, 1, t, T) \partial_x,
&& L_{1,0} ^{(6)} (t, T) := -\Psi_3 (-\alpha_1, 0, 0, 1, t, T) \partial_y,
}
and
\eqstar{
L_{0,1} ^{(1)} (t, T) &:= {1\over 2}\sigma_2^2 \Psi_2 (\alpha_2, 0, 0, 1, t, T) \partial_y^3, 
&& L_{0,1} ^{(2)} (t, T) := {1\over 2}\sigma_2^2 \Psi_3 (\alpha_2, 0, 0, 1, t, T) \partial_x\partial_y^2,\\
L_{0,1} ^{(3)} (t, T) &:= {1\over 2} \widehat{\rho} \Psi_2 (\bar{\alpha}, 1, -1, 1, t, T) \partial_x\partial_y^2,
&& L_{0,1} ^{(4)} (t, T) := {1\over 2} \widehat{\rho} \Psi_3 (\bar{\alpha}, 1, -1, 1, t, T) \partial_x^2\partial_y,\\
L_{0,1} ^{(5)} (t, T) &:= -\Psi_2 (-\alpha_2, 0, 0, 1, t, T) \partial_y,
&& L_{0,1} ^{(6)} (t, T) := -\Psi_3 (-\alpha_2, 0, 0, 1, t, T) \partial_x.
}
We observe that the partial derivatives of $v_0$ in \eqref{eq: v0} exhibit the property such that
\eqlnostar{eq: partial derivative}{
\partial_x^i \partial_y^j v_0(t,x,y,T) = (-1)^{i+j} \psi(-\alpha_1,t,T)^i \psi(-\alpha_2,t,T)^j v_0(t,x,y,T), && i,j\in\mathbb{N}_0.
}
Multiplying \eqref{eq: v0} and \eqref{eq: L1} in accordance with the principle outlined in \eqref{eq: partial derivative}, we derive the explicit expression for $v_1 (t, x, y, T)$. Analogous to \eqref{eq: v1}, the second term in the approximation of $v(t, x, y, T)$, for any $t\in [0, T)$, is given as
\eqlnostar{eq: v2}{
v_2(t,x,y,T) &= \Lc_1 (t, T) v_0(t,x,y,T)\\
&= \left(\int_t^T \dd s_1 \Gc_2(t,s_1) + \int_t^T \dd s_1 \Gc_1(t,s_1) \int_{s_1}^T \dd s_2 \Gc_1(t,s_2) \right) v_0(t,x,y,T),
}
where $\Gc_1(t,s)$ is as specified in \eqref{eq: G1}, and $\Gc_2(t,s)$ is given by
\eqstar{
\Gc_2(t,s) = {1\over 2} \xt (t, s)^2 \Ac_{2,0} (s) + \xt (t, s) \yt (t, s) \Ac_{1,1} (s) + \yt (t, s)^2 \Ac_{0,2} (s).
}
Here, $\Ac_{2,0} (s)$, $\Ac_{1,1} (s)$, and $\Ac_{0,2} (s)$ are components of $\Ac_2(s)$, which are defined as follows
\eqstar{
\Ac_{2,0} (s) &:= -{1\over 4}\widehat{\rho} \bar{x}^{-{3\over 2}}(t, s) \bar{y}^{1\over 2} (t, s) \ee^{\bar{\alpha} t} \partial_{xy},\\
\Ac_{1,1} (s) &:= {1\over 4}\widehat{\rho} \bar{x}^{-{1\over 2}} (t, s) \bar{y}^{-{1\over 2}} (t, s)\ee^{\bar{\alpha} t} \partial_{xy},\\
\Ac_{0,2} (s) &:= -{1\over 4}\widehat{\rho} \bar{x}^{1\over 2} (t, s) \bar{y}^{-{3\over 2}} (t, s) \ee^{\bar{\alpha} t} \partial_{xy}.
}
Accordingly, we divide the operator $\Lc_2 (t, T)$ into two parts, such that $\Lc_2 (t, T) = \Lc_2 ^{(1)} (t, T) + \Lc_2 ^{(2)} (t, T)$. We define the first component as follows
\eqlnostar{eq: L21}{
\Lc_2 ^{(1)} (t, T) &:= \int_t^T \dd s_1 \Gc_2(t,s_1) \nonumber\\
&= \widehat{\rho} \left(-{1\over 8} \sum_{i=1}^3 L_{2,0} ^{(i)} (t, T) + {1\over 4} \sum_{i=1}^4 L_{1,1} ^{(i)} (t, T) - {1\over 8} \sum_{i=1}^3 L_{0,2} ^{(i)} (t, T)\right),
}
where each term is derived in the following form
\eqstar{
L_{2,0} ^{(1)} (t, T) &:= \Psi_1 (\bar{\alpha}, -3, 1, 2, t, T) \partial_x^3\partial_y, 
&& L_{2,0} ^{(2)} (t, T) := \Psi_3 (\bar{\alpha}, -3, 1, 2, t, T) \partial_x\partial_y^3,\\
L_{2,0} ^{(3)} (t, T) &:= 2\Phi_1 (\bar{\alpha}, -3, 1, t, T) \partial_x^2\partial_y^2,
&&L_{0,2} ^{(1)} (t, T) := \Psi_2 (\bar{\alpha}, 1, -3, 2, t, T) \partial_x\partial_y^3,\\ 
L_{0,2} ^{(2)} (t, T) &:= \Psi_3 (\bar{\alpha}, 1, -3, 2, t, T) \partial_x^3\partial_y,
&& L_{0,2} ^{(3)} (t, T) := 2\Phi_2 (\bar{\alpha}, 1, -3, t, T) \partial_x^2\partial_y^2,\\
L_{1,1} ^{(1)} (t, T) &:= \Phi_3 (\bar{\alpha}, -1, -1, t, T) \partial_x^2\partial_y^2, 
&& L_{1,1} ^{(2)} (t, T) := \Phi_1 (\bar{\alpha}, -1, -1, t, T) \partial_x^3\partial_y,\\
L_{1,1} ^{(3)} (t, T) &:= \Phi_2 (\bar{\alpha}, -1, -1, t, T) \partial_x\partial_y^3, 
&& L_{1,1} ^{(4)} (t, T) := \Psi_3 (\bar{\alpha}, -1, -1, 2, t, T) \partial_x^2\partial_y^2.
}
The second part is characterised by
\eqlnostar{eq: L22}{
\Lc_2 ^{(2)} (t, T) := \int_t^T \dd s_1 \Gc_1(t,s_1) \int_{s_1}^T \dd s_2 \Gc_1(t,s_2) = \int_t^T \dd s_1 \Gc_1(t,s_1) \Lc_1 (s_1, T).
}
We will not provide an explicit result for the above integral, as the integrand consists of multiple components, and numerical computation is considered more efficient. By summing \eqref{eq: v0}, \eqref{eq: v1} and \eqref{eq: v2}, we obtain the second-order approximation of $v(t, x, y, T)$, which also represents the second-order approximation of $\Eb \big[ \ee^{-\int_0^{T} (r_s + \lambda_s) \dd s} \big]$.

The second-order approximation of $h(t, x, y, T)$ is derived in a similar way but includes additional terms compared to the second-order approximation of $v(t, x, y, T)$. Given the zeroth term $h_0$ in \eqref{eq: h0}, the first term $h_1(t, x, y, T)$ is given by
\eqstar{
h_1 (t, x, y, T) = \Lc_1 (t, T) h_0 (t, x, y, T) = \Lc_1 (t, T) \bigl( v_0 (t, x, y, T) (y + \alpha_2 \beta_2 \psi(\alpha_2, t,T))\bigr).
}
Let $f: \Rb^2 \to \Rb$ be an infinitely differentiable function. The following property holds
\eqlnostar{eq: partial derivative - y}{
\partial_x^i \partial_y^j\ (f \cdot y) = \big(\partial_x^i \partial_y^j f \big)\cdot y + j \cdot (\partial_x^i \partial_y^{j-1} f), && i,j\in\mathbb{N}_0.
}
Thus, $h_1(t, x, y, T)$ can be rewritten as follows
\eqlnostar{eq: h1}{
h_1 (t, x, y, T) & =\big(\Lc_1 (t, T) v_0 (t, x, y, T) \big) \big(y + \alpha_2 \beta_2 \psi(\alpha_2, t,T)\big)
+ \widetilde{\Lc}_1 (t, T) v_0 (t, x, y, T)\nonumber\\
&= v_1(t, x, y, T) \big(y + \alpha_2 \beta_2 \psi(\alpha_2, t,T)\big) + \widetilde{\Lc}_1 (t, T) v_0 (t, x, y, T),
}
where $\widetilde{\Lc}_1 (t, T)$ denotes the operator for the second term on the right-hand side of \eqref{eq: partial derivative - y}, which is expressed as
\eqstar{
\widetilde{\Lc}_1 (t, T) := \frac{1}{\partial_y} \big(L_{1,0}^{(2)} + L_{1,0}^{(3)} + 2L_{1,0}^{(4)} + L_{1,0}^{(2)} + 3 L_{0,1}^{(1)} + 2 L_{0,1}^{(2)} + 2 L_{0,1}^{(3)} + L_{0,1}^{(4)} + L_{0,1}^{(5)}\big) (t, T).
}
Analogously, the second term in the approximation for $h(t, x, y, T)$ is deduced as follows
\eqlnostar{eq: h2}{
h_2 (t, x, y, T) & =v_2 (t, x, y, T) \big(y + \alpha_2 \beta_2 \psi(\alpha_2, t,T)\big)
+ \widetilde{\Lc}_2 (t, T) v_0 (t, x, y, T).
}
We divide the operator $\widetilde{\Lc}_2 (t, T)$ for the source term into two parts, such that $\widetilde{\Lc}_2 (t, T):= \widetilde{\Lc}_2^{(1)} (t, T) + \widetilde{\Lc}_2^{(2)} (t, T)$. The first component denotes the operator of the extra terms derived from \eqref{eq: L21} according to the principle in \eqref{eq: partial derivative - y}, and the second component denotes that derived from \eqref{eq: L22}. Therefore, $\widetilde{\Lc}_2^{(1)} (t, T)$ is given as
\eqstar{
\widetilde{\Lc}_2^{(1)} (t, T) &:= \frac{1}{\partial_y} \left( L_{2,0} ^{(1)} + 3 L_{2,0} ^{(2)} + 2L_{2,0} ^{(3)} + 3 L_{0,2} ^{(1)} + L_{0,2} ^{(2)} + 2L_{0,2} ^{(3)} \right) (t, T) \\
&+ \frac{1}{\partial_y} \left(2 L_{1,1} ^{(1)} + L_{1,1} ^{(2)} + 3L_{1,1} ^{(3)} + 2L_{1,1} ^{(4)}  \right) (t, T).
}
For $\widetilde{\Lc}_2^{(2)} (t, T)$, we do not present the explicit formula either. By summing \eqref{eq: h0}, \eqref{eq: h1} and \eqref{eq: h2}, we obtain the second-order approximation of $h(t, x, y, T)$.

\subsection{Further calibration results - Bloomberg data}
\label{sec: further calibration - Bloomberg}

In this section, we present the calibration results for two additional entities, Citigroup Inc and Deutsche Bank AG, considering both the correlated and uncorrelated cases.

\begin{table}[H]
\centering
\footnotesize
\captionsetup{font=footnotesize, skip = 5pt}

\begin{threeparttable}
\caption{Calibration results for Citigroup Inc CDS spreads (correlated case)}
\label{tab: citi21}
\begin{tabularx}{.7\textwidth}
{c>{\centering\arraybackslash}X>{\centering\arraybackslash}X>{\centering\arraybackslash}X}
\toprule
& & \multicolumn{2}{c}{PDE approximation}\\
% & & \multicolumn{2}{c}{approximation}\\
\cmidrule{3-4}
Term (year) & Market (bps) & Model (bps) & Rel. Error \\
% (year) & (bps) & (bps) &  \\
\midrule
0.7 & 20.768 & 20.462 & 1.4734\% \\
1.2 & 25.224 & 24.654 & 2.2598\% \\
1.7 & 28.588 & 28.742 & 0.5387\% \\
2.2 & 31.933 & 32.730 & 2.4959\% \\
2.7 & 35.248 & 36.617 & 3.8839\% \\
3.2 & 38.544 & 40.406 & 4.8308\% \\
3.8 & 43.259 & 44.826 & 3.6224\% \\
4.3 & 47.974 & 48.400 & 0.8880\% \\
4.8 & 52.328 & 51.876 & 0.8638\% \\
5.3 & 56.658 & 55.255 & 2.4763\% \\
5.8 & 60.104 & 58.540 & 2.6022\% \\
6.3 & 63.531 & 61.729 & 2.8364\% \\
6.8 & 66.977 & 64.825 & 3.2130\% \\
7.3 & 70.404 & 67.829 & 3.6575\% \\
7.8 & 72.211 & 70.742 & 2.0343\% \\
8.3 & 74.019 & 73.566 & 0.6120\% \\
8.8 & 75.826 & 76.302 & 0.6278\% \\
9.3 & 77.623 & 78.953 & 1.7134\% \\
9.8 & 79.431 & 81.519 & 2.6287\% \\
10.3 & 81.228 & 84.003 & 3.4163\% \\
\bottomrule
\end{tabularx}
\begin{tablenotes}
\item $\alpha_2 = 0.04372$, $\beta_2 = 0.06900$, $\sigma_2 = 0.06852$, $\lambda_0 = 0.00239$, $\rho = -0.69724$.
\end{tablenotes}
\end{threeparttable}
\end{table}
\begin{table}[H]
\centering
\footnotesize
\captionsetup{font=footnotesize, skip = 5pt}
\begin{threeparttable}
\caption{Calibration results for Deutsche Bank AG CDS spreads (correlated case)}
\label{tab: deut21}
\footnotesize
\begin{tabularx}{.7\textwidth}
{c>{\centering\arraybackslash}X>{\centering\arraybackslash}X>{\centering\arraybackslash}X}
\toprule
& & \multicolumn{2}{c}{PDE approximation}\\
% & & \multicolumn{2}{c}{approximation}\\
\cmidrule{3-4}
Term (year) & Market (bps) & Model (bps) & Rel. Error \\
% (year) & (bps) & (bps) &  \\
\midrule
0.7 & 25.341 & 25.282 & 0.2328\% \\
1.2 & 30.956 & 30.983 & 0.0872\% \\
1.7 & 36.652 & 36.633 & 0.0518\% \\
2.2 & 42.317 & 42.233 & 0.1985\% \\
2.7 & 47.931 & 47.781 & 0.3130\% \\
3.2 & 53.514 & 53.275 & 0.4466\% \\
3.8 & 58.108 & 59.791 & 2.8963\% \\
4.3 & 62.701 & 65.156 & 3.9154\% \\
4.8 & 68.727 & 70.459 & 2.5201\% \\
5.3 & 74.721 & 75.697 & 1.3062\% \\
5.8 & 81.100 & 80.869 & 0.2848\% \\
6.3 & 87.444 & 85.972 & 1.6834\% \\
6.8 & 93.824 & 91.005 & 3.0046\% \\
7.3 & 100.168 & 95.966 & 4.1950\% \\
7.8 & 103.632 & 100.853 & 2.6816\% \\
8.3 & 107.095 & 105.664 & 1.3362\% \\
8.8 & 110.558 & 110.398 & 0.1447\% \\
9.3 & 114.003 & 115.053 & 0.9210\% \\
9.8 & 117.466 & 119.629 & 1.8414\% \\
10.3 & 120.911 & 124.123 & 2.6565\% \\
\bottomrule
\end{tabularx}
\begin{tablenotes}
\item $\alpha_2 = 0.00777$, $\beta_2 = 0.49897$, $\sigma_2 = 0.00014$, $\lambda_0 = 0.00286$, $\rho = 0.34303$.
\end{tablenotes}
\end{threeparttable}
\end{table}

\begin{table}[H]
\centering
\footnotesize
\captionsetup{font=footnotesize, skip = 5pt}

\begin{threeparttable}
\caption{Calibration results for Citigroup Inc CDS spreads (uncorrelated case)}
\label{tab: citi22}
\begin{tabularx}{.7\textwidth}
{c>{\centering\arraybackslash}X>{\centering\arraybackslash}X>{\centering\arraybackslash}X}
\toprule
& & \multicolumn{2}{c}{PDE approximation}\\
% & & \multicolumn{2}{c}{approximation}\\
\cmidrule{3-4}
Term (year) & Market (bps) & Model (bps) & Rel. Error \\
% (year) & (bps) & (bps) &  \\
\midrule
0.7 & 20.768 & 20.435 & 1.6034\% \\
1.2 & 25.224 & 24.640 & 2.3153\% \\
1.7 & 28.588 & 28.738 & 0.5247\% \\
2.2 & 31.933 & 32.734 & 2.5084\% \\
2.7 & 35.248 & 36.628 & 3.9151\% \\
3.2 & 38.544 & 40.421 & 4.8698\% \\
3.8 & 43.259 & 44.844 & 3.6640\% \\
4.3 & 47.974 & 48.419 & 0.9276\% \\
4.8 & 52.328 & 51.895 & 0.8275\% \\
5.3 & 56.658 & 55.274 & 2.4427\% \\
5.8 & 60.104 & 58.555 & 2.5772\% \\
6.3 & 63.531 & 61.742 & 2.8160\% \\
6.8 & 66.977 & 64.833 & 3.2011\% \\
7.3 & 70.404 & 67.832 & 3.6532\% \\
7.8 & 72.211 & 70.739 & 2.0385\% \\
8.3 & 74.019 & 73.557 & 0.6242\% \\
8.8 & 75.826 & 76.285 & 0.6053\% \\
9.3 & 77.623 & 78.928 & 1.6812\% \\
9.8 & 79.431 & 81.486 & 2.5872\% \\
10.3 & 81.228 & 83.961 & 3.3646\% \\
\bottomrule
\end{tabularx}
\begin{tablenotes}
\item $\alpha_2 = 0.04539$, $\beta_2 = 0.06678$, $\sigma_2 = 0.06657$, $\lambda_0 = 0.00238$.
\end{tablenotes}
\end{threeparttable}
\end{table}
\begin{table}[H]
\centering
\footnotesize
\captionsetup{font=footnotesize, skip = 5pt}
\begin{threeparttable}
\caption{Calibration results for Deutsche Bank AG CDS spreads (uncorrelated case)}
\label{tab: deut22}
\begin{tabularx}{.7\textwidth}
{c>{\centering\arraybackslash}X>{\centering\arraybackslash}X>{\centering\arraybackslash}X}
\toprule
& & \multicolumn{2}{c}{PDE approximation}\\
% & & \multicolumn{2}{c}{approximation}\\
\cmidrule{3-4}
Term (year) & Market (bps) & Model (bps) & Rel. Error \\
% (year) & (bps) & (bps) &  \\
\midrule
0.7 & 25.341 & 25.121 & 0.8682\% \\
1.2 & 30.956 & 30.898 & 0.1874\% \\
1.7 & 36.652 & 36.613 & 0.1064\% \\
2.2 & 42.317 & 42.269 & 0.1134\% \\
2.7 & 47.931 & 47.863 & 0.1419\% \\
3.2 & 53.514 & 53.392 & 0.2280\% \\
3.8 & 58.108 & 59.939 & 3.1510\% \\
4.3 & 62.701 & 65.319 & 4.1754\% \\
4.8 & 68.727 & 70.627 & 2.7646\% \\
5.3 & 74.721 & 75.862 & 1.5270\% \\
5.8 & 81.100 & 81.022 & 0.0962\% \\
6.3 & 87.444 & 86.104 & 1.5324\% \\
6.8 & 93.824 & 91.108 & 2.8948\% \\
7.3 & 100.168 & 96.031 & 4.1301\% \\
7.8 & 103.632 & 100.872 & 2.6633\% \\
8.3 & 107.095 & 105.629 & 1.3689\% \\
8.8 & 110.558 & 110.303 & 0.2307\% \\
9.3 & 114.003 & 114.890 & 0.7781\% \\
9.8 & 117.466 & 119.390 & 1.6379\% \\
10.3 & 120.911 & 123.803 & 2.3918\% \\
\bottomrule
\end{tabularx}
\begin{tablenotes}
\item $\alpha_2 = 0.01235$, $\beta_2 = 0.31977$, $\sigma_2 = 0.01622$, $\lambda_0 = 0.00282$.
\end{tablenotes}
\end{threeparttable}
\end{table}

\subsection{Further calibration results - negative interest rates}
\label{sec: further calibration - JDCEV}

In this section, we present the calibration results based on the market data reported in \citet{di_francesco_cds_2019}, for four distinct entities: Caixa Bank SA, Commerzbank AG, Deutsche Bank AG, and Mediobanca SpA.

\begin{table}[H]
\centering
\footnotesize
\captionsetup{font=footnotesize, skip = 5pt}
\begin{threeparttable}
\caption{Calibration results for Caixa Bank SA CDS spreads (correlated case)}
\label{tab: caixa1}
\begin{tabularx}{.7\textwidth}
{c>{\centering\arraybackslash}X>{\centering\arraybackslash}X>{\centering\arraybackslash}X>{\centering\arraybackslash}X>{\centering\arraybackslash}X}
\toprule
Term & Market & JDCEV & Rel. Error & SSRD & Rel. Error\\
(year) & (bps) & (bps) &  & (bps) &  \\
\midrule
1.0 & 76.878 & 76.655 & 0.29046\% & 75.872 & 1.30895\% \\
1.5 & 83.557 & 85.162 & 1.92060\% & 83.511 & 0.05553\% \\
2.0 & 90.452 & 90.115 & 0.37224\% & 91.107 & 0.72448\% \\
2.5 & 97.552 & 96.184 & 1.40223\% & 98.652 & 1.12802\% \\
3.0 & 104.847 & 103.465 & 1.31811\% & 106.136 & 1.22941\% \\
3.5 & 112.325 & 111.251 & 0.95615\% & 113.553 & 1.09326\% \\
4.0 & 119.976 & 120.190 & 0.17837\% & 120.895 & 0.76599\% \\
4.5 & 127.787 & 130.524 & 2.14185\% & 128.155 & 0.28798\% \\
5.0 & 135.743 & 139.515 & 2.77878\% & 135.327 & 0.30646\% \\
5.5 & 143.832 & 144.723 & 0.61947\% & 142.406 & 0.99143\% \\
6.0 & 152.039 & 147.885 & 2.73219\% & 149.385 & 1.74560\% \\
\bottomrule
\end{tabularx}
\begin{tablenotes}
\item $\alpha_2 = 0.00561$, $\beta_2 = 0.92493$, $\sigma_2 = 0.02352$, $\lambda_0 = 0.01011$, $\rho = -0.02910$.
\end{tablenotes}
\end{threeparttable}
\end{table}
\begin{table}[H]
\centering
\footnotesize
\captionsetup{font=footnotesize, skip = 5pt}
\begin{threeparttable}
\caption{Calibration results for Commerzbank AG CDS spreads (correlated case)}
\label{tab: commer1}
\begin{tabularx}{.7\textwidth}
{c>{\centering\arraybackslash}X>{\centering\arraybackslash}X>{\centering\arraybackslash}X>{\centering\arraybackslash}X>{\centering\arraybackslash}X}
\toprule
Term & Market & JDCEV & Rel. Error & SSRD & Rel. Error\\
(year) & (bps) & (bps) &  & (bps) &  \\
\midrule
1.0 & 44.690 & 44.782 & 0.20564\% & 44.794 & 0.23271\% \\
1.5 & 53.933 & 54.045 & 0.20711\% & 53.932 & 0.00148\% \\
2.0 & 63.175 & 63.075 & 0.15877\% & 62.925 & 0.39573\% \\
2.5 & 71.838 & 71.884 & 0.06515\% & 71.764 & 0.10245\% \\
3.0 & 80.285 & 80.491 & 0.25671\% & 80.442 & 0.19555\% \\
3.5 & 88.977 & 88.916 & 0.06844\% & 88.953 & 0.02697\% \\
4.0 & 97.810 & 97.184 & 0.64053\% & 97.291 & 0.53062\% \\
4.5 & 106.475 & 105.319 & 1.08570\% & 105.451 & 0.96173\% \\
5.0 & 114.405 & 113.349 & 0.92304\% & 113.430 & 0.85224\% \\
5.5 & 121.117 & 121.298 & 0.14944\% & 121.225 & 0.08917\% \\
6.0 & 126.730 & 129.192 & 1.94271\% & 128.833 & 1.65943\% \\
\bottomrule
\end{tabularx}
\begin{tablenotes}
\item $\alpha_2 = 0.03966$, $\beta_2 = 0.16350$, $\sigma_2 = 0.01600$, $\lambda_0 = 0.00436$, $\rho = 0.04662$.
\end{tablenotes}
\end{threeparttable}
\end{table}
\begin{table}[H]
\centering
\footnotesize
\captionsetup{font=footnotesize, skip = 5pt}
\begin{threeparttable}
\caption{Calibration results for Deutsche Bank AG CDS spreads (correlated case)}
\label{tab: deut1}
\begin{tabularx}{.7\textwidth}
{c>{\centering\arraybackslash}X>{\centering\arraybackslash}X>{\centering\arraybackslash}X>{\centering\arraybackslash}X>{\centering\arraybackslash}X}
\toprule
Term & Market & JDCEV & Rel. Error & SSRD & Rel. Error\\
(year) & (bps) & (bps) &  & (bps) &  \\
\midrule
1.0 & 67.020 & 66.420 & 0.89526\% & 65.515 & 2.24560\% \\
1.5 & 77.786 & 79.185 & 1.79774\% & 80.347 & 3.29184\% \\
2.0 & 92.015 & 92.529 & 0.55850\% & 94.052 & 2.21377\% \\
2.5 & 107.233 & 105.796 & 1.34007\% & 106.705 & 0.49239\% \\
3.0 & 120.505 & 118.488 & 1.67379\% & 118.378 & 1.76507\% \\
3.5 & 129.985 & 130.235 & 0.19233\% & 129.141 & 0.64931\% \\
4.0 & 138.645 & 140.774 & 1.53558\% & 139.060 & 0.29933\% \\
4.5 & 149.244 & 149.928 & 0.45831\% & 148.200 & 0.69953\% \\
5.0 & 158.860 & 157.592 & 0.79819\% & 156.617 & 1.41194\% \\
5.5 & 164.376 & 163.718 & 0.40030\% & 164.369 & 0.00426\% \\
6.0 & 167.590 & 168.305 & 0.42664\% & 171.508 & 2.33785\% \\
\bottomrule
\end{tabularx}
\begin{tablenotes}
\item $\alpha_2 = 0.22724$, $\beta_2 = 0.05817$, $\sigma_2 = 0.06869$, $\lambda_0 = 0.00537$, $\rho = -0.05432$.
\end{tablenotes}
\end{threeparttable}
\end{table}
\begin{table}[H]
\centering
\footnotesize
\captionsetup{font=footnotesize, skip = 5pt}
\begin{threeparttable}
\caption{Calibration results for Mediobanca SpA CDS spreads (correlated case)}
\label{tab: medio1}
\begin{tabularx}{.7\textwidth}
{c>{\centering\arraybackslash}X>{\centering\arraybackslash}X>{\centering\arraybackslash}X>{\centering\arraybackslash}X>{\centering\arraybackslash}X}
\toprule
Term & Market & JDCEV & Rel. Error & SSRD & Rel. Error\\
(year) & (bps) & (bps) &  & (bps) &  \\
\midrule
1.00 & 87.545 & 87.456 & 0.10200\% & 87.063 & 0.55057\% \\
1.50 & 96.831 & 96.862 & 0.03232\% & 97.283 & 0.46679\% \\
2.00 & 106.715 & 106.688 & 0.02530\% & 107.292 & 0.54069\% \\
2.50 & 116.657 & 116.678 & 0.01800\% & 117.079 & 0.36174\% \\
3.00 & 126.405 & 126.621 & 0.17088\% & 126.633 & 0.18037\% \\
3.50 & 135.880 & 136.343 & 0.34074\% & 135.949 & 0.05078\% \\
4.00 & 145.420 & 145.703 & 0.19461\% & 145.019 & 0.27575\% \\
4.50 & 155.159 & 154.586 & 0.36930\% & 153.837 & 0.85203\% \\
5.00 & 164.010 & 162.903 & 0.67496\% & 162.403 & 0.97982\% \\
5.50 & 170.956 & 170.583 & 0.21818\% & 170.713 & 0.14214\% \\
6.00 & 176.485 & 177.572 & 0.61592\% & 178.765 & 1.29189\% \\
\bottomrule
\end{tabularx}
\begin{tablenotes}
\item $\alpha_2 = 0.04117$, $\beta_2 = 0.18416$, $\sigma_2 = 0.07196$, $\lambda_0 = 0.01103$, $\rho = 0.05469$.
\end{tablenotes}
\end{threeparttable}
\end{table}

\end{appendix}

\end{document}